\let\csname equation*\endcsname\relax
\let\csname endequation*\endcsname\relax
\crefname{equation}{}{}
\crefname{figure}{figure}{figures}
\newcommand{\bra}[1]{\left\langle #1 \right\rvert}
\newcommand{\ket}[1]{ \left\lvert #1\right\rangle}
\newcommand{\ketbra}[2]{\left\lvert #1 \rangle \! \langle #2 \right\rvert}
\newcommand{\tV}{\vert\kern-0.25ex\vert\kern-0.25ex\vert}
\newcommand{\set}[1]{\{#1\}}
\renewcommand{\tr}[1]{\operatorname{Tr}\left(#1\right)}
\DeclareMathOperator{\id}{id}
\DeclareMathOperator{\End}{End}
\newcommand{\cA}{\ensuremath{\mathcal{A}}}
\newcommand{\cD}{\ensuremath{\mathcal{D}}}
\newcommand{\cH}{\ensuremath{\mathcal{H}}}
\newcommand{\cM}{\ensuremath{\mathcal{M}}}
\newcommand{\cS}{\ensuremath{\mathcal{S}}}
\newcommand{\cT}{\ensuremath{\mathcal{T}}}
\newcommand{\cU}{\ensuremath{\mathcal{U}}}
\newcommand{\bbC}{\ensuremath{\mathbb{C}}}
\newcommand{\bbI}{\ensuremath{\mathbb{I}}}
\newcommand{\bbN}{\ensuremath{\mathbb{N}}}
\newcommand{\bbP}{\ensuremath{\mathbb{P}}}
\newcommand{\bbR}{\ensuremath{\mathbb{R}}}
\newcommand{\calU}{\mathcal{U}}
\newcommand{\calD}{\mathcal{D}}
\newcommand{\lb}{\left(}
\newcommand{\rb}{\right)}
\newcommand{\Dkl}[2]{D\lb#1\lvert\rvert#2\rb}
\theoremstyle{plain}
\newtheorem{thm}{Theorem}
\newtheorem{lem}[thm]{Lemma}
\newtheorem{cor}[thm]{Corollary}
\theoremstyle{definition}
\newtheorem{defn}[thm]{Definition}
\theoremstyle{remark}
\begin{document}

\title[Approximate randomized benchmarking for finite groups]{Approximate randomized benchmarking \\ for finite groups}

\author{D~S Fran\c{c}a and A~K Hashagen}
\eads{\mailto{dsfranca@mytum.de}, \mailto{hashagen@ma.tum.de}}
\address{Department of Mathematics, Technical University of Munich, Germany}
\vspace{10pt}
\begin{indented}
\item[]March 2018
\end{indented}

\begin{abstract}
We investigate randomized benchmarking in a general setting with quantum gates that form a representation, not necessarily an irreducible one, of a finite group. We derive an estimate for the average fidelity, to which experimental data may then be calibrated. Furthermore, we establish that randomized benchmarking can be achieved by the sole implementation of quantum gates that generate the group as well as one additional arbitrary group element.
In this case, we need to assume that the noise is close to being covariant. This yields a more practical approach
to randomized benchmarking. Moreover, we show that randomized benchmarking is stable
with respect to approximate Haar sampling for the sequences of gates. This opens up the
possibility of using Markov chain Monte Carlo methods to obtain the random sequences
of gates more efficiently. We demonstrate these results numerically using the well-studied
example of the Clifford group as well as the group of monomial unitary matrices. For the
latter, we focus on the subgroup with nonzero entries consisting of n-th roots of unity, which
contains T gates.
\end{abstract}

%
\vspace{2pc}
\noindent{\it Keywords}: Randomized benchmarking, quantum gates, Clifford gates, monomial unitary, random walks on groups, fidelity estimation. 

\submitto{\jpa}
%
\maketitle
%
%

\section{Introduction}
\label{sec:Introduction}
One of the main obstacles to build reliable quantum computers is the need to implement
quantum gates with high fidelity.
Therefore, it is key to develop techniques to estimate the quality of quantum gates and thus
certify the quality of a quantum computer.
To this end, one could perform tomography for the underlying noise in the implementation
and in principle obtain a complete description of it~\cite{Poyatos_1997,Chuang_1997}. 
However, in general, the number of measurements necessary to estimate for a complete tomography of the noise scales exponentially
with the system size and is not a practical solution to the problem.
Thus, it is vital to develop techniques to estimate the level of noise in systems more efficiently, even if we only
obtain partial information.

Randomized benchmarking (RB) is a protocol to estimate the average fidelity of a set of quantum gates forming a representation of a group~\cite{Knill_2008, Emerson_2007, Levi_2007, Emerson_2005}.
The very important case of Clifford gates has already been widely studied and some rigorous results that show its efficiency under
some noise scenarios are available~\cite{Wallman_Flammia_2014,Helsen_2017}, such as when the noise is independent of the gate and time.
Besides its efficiency, another highlight of the protocol is that it is robust against state preparation and measurement errors.
This makes it very attractive from an experimental point of view
and its applicability was demonstrated successfully~\cite{Chow_2009, Ryan_2009, Olmschenk_2010, Brown_2011, Gaebler_2012, Barends_2014, Xia_2015, Muhonen_2015, Asaad_2016}. 

In this work, we show how to extend these protocols to gates that are representations of a finite group\footnote{Most of the results in this work can easily be extended to compact groups. However, as it is not clear that implementing the RB protocol for compact groups is relevant for applications and given that this would make some proofs less accessible, we restrict to finite groups here.}; these must not necessarily be irreducible or form a $2$-design. 
Although other works, such as~\cite{Hashagen_2018, Brown_Eastin_2018, Cross_2016, CarignanDugas_2015}, already 
extended the protocol to other specific groups of interest, we focus on showing how to estimate the average fidelity based on
properties of the particular representation at hand for arbitrary finite groups.
To this end, we investigate the structure of quantum channels that are covariant under a unitary 
representation of a group and derive formulas
for their average fidelity in terms of their spectra. 
We then show that one can use RB to estimate the average fidelity of these gates
under the assumption that they are subject to time and gate independent noise.

In order for this procedure to be efficient, it is necessary that we may multiply, invert and sample uniformly distributed 
elements of the group efficiently and
that the given representation does not decompose into too many irreducible unitary representations, as we will discuss in more detail later.
This is the case for the well-studied case of Cliffords.

The usual RB protocol assumes that we can implement sequences of gates that are sampled from the Haar distribution
of the group~\cite{Knill_2008, Emerson_2007, Levi_2007, Emerson_2005}.
We further generalize the RB protocol by showing that it is possible to implement sequence gates that are
approximately Haar distributed instead.
Therefore, it is possible to use Markov chain Monte Carlo methods to obtain the samples, potentially more efficiently. 
This result is of independent interest to the RB literature, as it shows that the protocol is stable against small errors in the sampling.

Moreover, we show how one can perform RB by just implementing gates that generate the group and 
one additional random element from the group at each round of the protocol. Thus, this last gate will generally not be an element of the generators.
Mostly considering generators provides a more natural framework to the protocol, 
as often one is only able to implement a certain number of gates that generate the group and must,
 therefore, decompose the gates into generators. However, this protocol works under the assumption that
the noise affecting the gates is already close to being covariant with respect to (w.r.t.) the group and not for arbitrary quantum channels, as
in the usual setting. Moreover, we still need the ability to implement one gate which might not be contained in the set of generators and still assume
that the same quantum channel that describes the noise on the generators also describes the noise on this gate. 
To illustrate our techniques, we apply them to subgroups of the monomial unitary matrices, i.e. products of $d-$dimensional permutation and diagonal unitary matrices. These can
be seen as a generalization of stabilizer groups~\cite{VandenNest_2011}. 
We focus on the subgroup of monomial unitary matrices whose nonzero entries are roots of unity. We show that 
we only need to estimate two parameters and multiplying and inverting elements of it 
can be done in time
$\Or(d)$. 
Moreover, they include the $T$-gate, which is known
to form a universal set for quantum computation together with the Clifford gates~\cite{Nielsen_2009}. 
Therefore, one can use the protocol described here to estimate the noise from 
$T$-gates more efficiently.
We make numerical simulations for our protocol and these subgroups and show that it is able to reliably estimate 
the average gate fidelity.
Moreover, we numerically compare our techniques based on approximate Haar samples and implementation of generators to the usual
protocol for Cliffords and show that the three yield indistinguishable results in the high fidelity regime.

This paper is structured as follows:
we start by fixing our notation and reviewing basic results on Markov chains and covariant quantum channels; needed in \cref{sec:Prelim}. 
In \cref{sec:Fidelities} we derive the average fidelity of quantum channels in terms of their spectra
and we give basic results on the decay of the probability of measurement outcomes under covariant quantum channels. These form the basis
for the RB protocol for general groups, which we discuss and analyze in~\cref{sec:RBprotocol}.
In~\cref{sec:approxtwirls} we prove that it is also possible to implement the protocol using approximate samples.
We then discuss the generalized RB protocol based on implementing random sequences of gates that generate the group
in~\cref{sec:Generators}. In this section, we also discuss the conditions under which this protocol applies.
Finally, in~\cref{sec:Numerics}, we apply our techniques to the subgroup of monomial unitary matrices and perform numerical experiments 
for it. In the same section, we also compare numerically the RB protocols developed here with the usual one in the case
of the Clifford group.

\section{Notation and Preliminaries}
\label{sec:Prelim}

We will be interested in finite dimensional quantum systems.
Denote by $\cM_d$  the space of $d\times d$ complex matrices.
We will denote by $\cD_d$ the set of $d$-dimensional quantum states, i.e., positive semi-definite matrices
$\rho\in\cM_d$ with trace $1$. 
We will call a linear map $T:\cM_d\to\cM_{d'}$ a quantum channel if it is trace preserving and completely positive. We will denote the adjoint of a quantum channel $T$ with
respect to the Hilbert-Schmidt scalar product by $T^*$.
We will call a collection of positive semidefinite matrices $\{E_i\}_{i=1}^{l}$ a positive operator valued measure~(POVM) 
if the POVM elements $E_i$, called effect operators, sum up to the identity.
Throughout this paper, we will use the channel-state duality that 
provides a one-to-one correspondence between a quantum channel $T: \cM_{d} \to \cM_{d}$ and 
its Choi-Jamiolkowski state $\tau_T \in \cM_{d^2}$ obtained by letting $T$ act on half of a maximally entangled state, i.e.,
\begin{equation}
\tau_T:= \left( T \otimes \id_d \right) \left( \ketbra{\Omega}{\Omega} \right),
\label{eq:ChoiJami}
\end{equation}
where $\ketbra{\Omega}{\Omega} \in \cM_{d^2}$  is a maximally entangled state, that is,
\begin{equation}
\ketbra{\Omega}{\Omega}= \frac{1}{d}\sum_{i,j=1}^d \ketbra{ii}{jj},
\label{eq:Omega}
\end{equation}
where $\{\ket{i}\}_{i=1}^d$ is an orthonormal basis in $\bbC^d$.
Please refer to~\cite{Heinosaari_Ziman_2012} for more on these concepts.  
To measure the distance between two states we will use the Schatten $1-$norm for $A\in\cM_d$, denoted by $\|\cdot\|_1$
and given by
\begin{align}
\|A\|_1:=\tr{(A^\dagger A)^{\frac{1}{2}}}, 
\end{align}
where $\dagger$ denotes the adjoint.
Then, given two states $\rho,\sigma\in\cD_d$, their trace distance is given by
$
\|\rho-\sigma\|_1 / 2 
$.
This norm on $\cM_d$ induces a norm on linear operators $\Phi:\cM_d\to\cM_d$ through
\begin{align}
\|\Phi\|_{1\to1}:=\sup\limits_{X\in\cM_d, X\not=0}\frac{\|\Phi(X)\|_1}{\|X\|_1}.
\end{align}
Given a random quantum channel $T:\cM_d\to\cM_d$, we will denote its expectation value by $\mathbb{E}(T)$.

We will also need some basic facts from the representation theory of finite groups. We refer to e.g.~\cite{Simon_1996} for more on this and the proofs
of the statements we use here. We will be particularly interested in the commutant of the algebra generated by the group. To this end we introduce:
\begin{defn}[Commutant]
Let $\cA$ be an algebra of operators on a Hilbert space $\cH$. Then the commutant $\cA'$ of $\cA$ is defined by
\begin{equation}
\cA' := \left\{ B \middle\vert BA =AB \text{ for all } A \in \cA \right\}.
\label{eq:commutant}
\end{equation}
\end{defn}
Recall that a function $U:G\to\cM_d$ is called a unitary representation of a finite group $G$ on a finite-dimensional Hilbert space $\cH \simeq \bbC^d$ if we have for all $g_1,g_2\in G$ that
$U_{g_1}U_{g_2}=U_{g_1g_2}$. 
We will denote the unitary corresponding to $g$ by $U_g$.
From basic results of representation theory, we know that there exists distinct $\alpha_1, \ldots, \alpha_k \in \hat{G}$, where $\hat{G}$ 
denotes the set of equivalence classes of irreducible unitary representations~(irreps), 
such that the unitary representation can be written as a direct sum of irreps, 
i.e. $U \cong \oplus U^{\alpha_i}\otimes \bbI_{m_\alpha}$ with $m_\alpha > 0$ denoting the degeneracy of the $\alpha_i$-th irrep.
The structure of the commutant 
is then described in the following theorem.
\begin{thm}[{\cite[Theorem IX.11.2]{Simon_1996}}]\label{thm:decompalgebra}
Let $U$ be a unitary representation of a finite group $G$ on $\cH$. Write 
$\cH = \oplus_{\alpha\in\hat{G}} \left( \bbC^{d_{\alpha}} \otimes \bbC^{m_\alpha}\right)$ so that 
$U_g = \oplus^k_{i=1} U_g^{\alpha_i} \otimes \bbI_{m_\alpha}$ with $\left\{\alpha_i \right\}^k_{i=1}$ distinct elements in $\hat{G}$. 
Let $\cA(U)$ be the algebra of operators generated by the $\left\{ U_g \right\}_{g \in G}$, and $\cA(U)'$ its commutant. Then
\begin{subequations}
\begin{align}
\cA(U) = \left\{ \oplus^k_{i=1} A_i \otimes \bbI_{m_\alpha} \middle\vert A_i \in \cM_{d_{\alpha_i}} \right\}, \\
\cA(U)' = \left\{ \oplus^k_{i=1}  \bbI_{d_{\alpha_i}} \otimes B_i \middle\vert B_i \in \cM_{m_\alpha} \right\}.
\end{align}
\end{subequations}
\end{thm}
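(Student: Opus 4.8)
The plan is to deduce the statement from Schur's lemma and the finite-dimensional bicommutant theorem. First I would recall the isotypic decomposition that underlies the hypothesis: by Maschke's theorem $U$ is completely reducible, and collecting equivalent irreps gives $\cH \cong \bigoplus_{i=1}^k (\bbC^{d_{\alpha_i}} \otimes \bbC^{m_{\alpha_i}})$ with $U_g \cong \bigoplus_{i=1}^k U_g^{\alpha_i} \otimes \bbI_{m_{\alpha_i}}$, which is exactly the normal form fixed in the statement. Two inclusions are then immediate: $\cA(U) \subseteq \{\bigoplus_i A_i \otimes \bbI_{m_{\alpha_i}}\}$ since each generator $U_g$ has this block form and the right-hand side is an algebra; and $\{\bigoplus_i \bbI_{d_{\alpha_i}} \otimes B_i\} \subseteq \cA(U)'$ since $(U_g^{\alpha_i}\otimes\bbI)(\bbI\otimes B_i) = (\bbI\otimes B_i)(U_g^{\alpha_i}\otimes\bbI)$ blockwise, and commuting with every generator of $\cA(U)$ is the same as commuting with $\cA(U)$.

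The core step is the reverse inclusion for the commutant. Given $T \in \cA(U)'$, write it in block form $T = (T_{ij})$ with $T_{ij}\colon \cH_j \to \cH_i$; the relation $TU_g = U_g T$ reads $T_{ij} = (U_g^{\alpha_i}\otimes\bbI_{m_{\alpha_i}})\, T_{ij}\, ((U_g^{\alpha_j})^{-1}\otimes\bbI_{m_{\alpha_j}})$ for all $g$. Under the canonical isomorphism $\Hom(\cH_j,\cH_i) \cong \Hom(\bbC^{d_{\alpha_j}},\bbC^{d_{\alpha_i}}) \otimes \Hom(\bbC^{m_{\alpha_j}},\bbC^{m_{\alpha_i}})$, on which $G$ acts only on the first tensor factor by $Y \mapsto U_g^{\alpha_i} Y (U_g^{\alpha_j})^{-1}$, writing $T_{ij}=\sum_a Y_a\otimes Z_a$ with the $Z_a$ linearly independent forces each $Y_a$ to be an intertwiner from $U^{\alpha_j}$ to $U^{\alpha_i}$. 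Schur's lemma then gives $Y_a = 0$ when $\alpha_i\not=\alpha_j$ and $Y_a \in \bbC\,\bbI_{d_{\alpha_i}}$ when $\alpha_i=\alpha_j$; hence $T_{ij}=0$ for $i\not=j$ and $T_{ii}=\bbI_{d_{\alpha_i}}\otimes B_i$ with $B_i \in \cM_{m_{\alpha_i}}$. Combined with the easy inclusion above, this proves the formula for $\cA(U)'$.

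It remains to identify $\cA(U)$ itself, and here I would take one further commutant. Since $U_g^\dagger = U_{g^{-1}} \in \cA(U)$, the algebra $\cA(U)$ is a unital $*$-subalgebra of $\cM_d$, so the bicommutant theorem gives $\cA(U) = \cA(U)''$, and it suffices to compute the commutant of $\{\bigoplus_i \bbI_{d_{\alpha_i}}\otimes B_i : B_i \in \cM_{m_{\alpha_i}}\}$. Running the block argument once more: for $S = (S_{ij})$ in this commutant, choosing the $B$'s supported in a single block kills the off-diagonal blocks ($S_{ij}=0$ for $i\not=j$), while $S_{ii}$ must commute with $\bbI_{d_{\alpha_i}}\otimes B_i$ for all $B_i$, which forces $S_{ii}=A_i\otimes\bbI_{m_{\alpha_i}}$. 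Hence $\cA(U)=\cA(U)''=\{\bigoplus_i A_i\otimes\bbI_{m_{\alpha_i}} : A_i\in\cM_{d_{\alpha_i}}\}$. I expect the only subtle point to be the clean application of Schur's lemma through the mixed tensor identification in the second paragraph — that is, making precise that the $G$-action is trivial on the multiplicity factors, so that $G$-invariance only constrains the first tensor factor; the remaining steps are routine bookkeeping. One could instead obtain the formula for $\cA(U)$ directly from the Jacobson density theorem and then take its commutant, but the bicommutant route minimizes the representation-theoretic input.
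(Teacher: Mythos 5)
Your argument is correct, but note that the paper does not prove this statement at all: it is imported verbatim from Simon's book (Theorem IX.11.2) as a standing structural fact, so there is no in-paper proof to compare against. Judged on its own, your proof is the standard one and all the steps go through. The two easy inclusions are fine; the Schur's-lemma step is handled correctly, including the one point that genuinely needs care, namely extracting $G$-invariance of each $Y_a$ from $\sum_a Y_a\otimes Z_a=\sum_a (g\cdot Y_a)\otimes Z_a$ via linear independence of the $Z_a$ (dual functionals on the second factor do the job), after which inequivalence of the $\alpha_i$ kills the off-diagonal blocks and scalarity on the diagonal gives $T_{ii}=\bbI_{d_{\alpha_i}}\otimes B_i$. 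For the identification of $\cA(U)$ itself you invoke the finite-dimensional bicommutant theorem for unital $*$-subalgebras; that is legitimate (and $U_g^\dagger=U_{g^{-1}}\in\cA(U)$ does make $\cA(U)$ a $*$-algebra), but it is the one nontrivial external input, and you could avoid it by exhibiting the projections onto the isotypic blocks directly as group averages against irreducible characters --- exactly the operators $P^\alpha$ the paper uses later in Theorem 3 --- and then applying the density/Burnside theorem blockwise, which is closer in spirit to how such statements are usually derived in the representation-theory literature the paper cites. Either way, the proof is complete and the bookkeeping in the final commutant computation (block-diagonality from the block projections, then $S_{ii}=A_i\otimes\bbI_{m_{\alpha_i}}$ from commuting with $\bbI\otimes\cM_{m_{\alpha_i}}$) is correct.
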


Given a finite group $G$, we will call the uniform probability distribution on it its Haar measure. For a proof of its existence and basic properties, 
we refer to~\cite[Section VII.3]{Simon_1996}. 
Given some unitary representation $U:G\to\cM_d$, we call the function $\chi:G\to\mathbb{C}$ given by $g\mapsto\tr{U_{g}}$ the character of the representation.
We will denote the character of an irreducible representation $\alpha\in\hat{G}$ by $\chi^\alpha$ and remark that one can find the decomposition
in \cref{thm:decompalgebra} through characters~\cite[Section III.2]{Simon_1996}.

\subsection{Covariant Quantum Channels and Twirls}
\label{sec:twirls}
The definition of covariance of quantum channels is central to the study of their symmetries and will be one of the building
blocks of the generalized RB protocol:
\begin{defn}[Covariant quantum channel {\cite{Mendl_Wolf_2009}}]
A quantum channel $T: \cM_d \to \cM_d$ is covariant w.r.t. a 
unitary representation $U: G\to\cM_d$ of a finite group $G$, if for all $g \in G$
\begin{equation}
T\left( U_g \cdot U_g^\dagger \right) = U_g T\left( \cdot \right) U_g^\dagger. 
\label{eq:CovariantQC}
\end{equation}
\end{defn}

In general, one allows different unitary representations of the group in the input and output of the channel in the definition of covariance,
but here we will restrict to the case when we have the same unitary representation.
There are many different and equivalent characterizations of covariance. Here we mention that covariance is
equivalent to the Choi-Jamiolkowski state $\tau_T$ commuting with $U_g\otimes  \bar{U}_g$ for all $g\in G$.
To see this, note that given a unitary representation $U$ of $G$ we may define its adjoint representation 
$\cU:G \to \End (\cM_d)$ through its action on any $X\in \cM_d$ by conjugation,
\begin{equation}
\cU_g(X) = U_g X U_g^\dagger.
\end{equation}
Through the Choi-Jamiolkowski isomorphism, it is easy to see that the adjoint representation 
is equivalent to  the unitary representation $U_g \otimes \bar{U}_g \in \cM_{d^2}$. 
As we can rephrase \cref{eq:CovariantQC} as $T$ commuting with the adjoint representation, this
translates to the Choi-Jamiolkowski state commuting with $U_g \otimes \bar{U}_g$.
This means in particular that we may use structural theorems, like \cref{thm:decompalgebra},
to investigate covariant channels, as covariance implies that the channel is  in the commutant of the adjoint representation. 
\begin{thm}\label{thm:structurecov}
Let $T: \cM_d \to \cM_d$ be a quantum channel that is covariant w.r.t. a unitary representation
$U$ of a finite group $G$
and let $\oplus_{\alpha\in\hat{G}} \left( \bbC^{d_{\alpha}} \otimes \bbC^{m_\alpha}\right)$
be the decomposition of the underlying Hilbert space into irreps $\alpha$ of $G$ with multiplicity $m_\alpha$ for the unitary representation $U\otimes\bar{U}$.
Then:
\begin{equation}
 T=\oplus_{\alpha\in\hat{G}}  \bbI_{d_{\alpha}} \otimes B_{\alpha} 
\end{equation}
with $B_\alpha\in\cM_{m_\alpha}$. 
\end{thm}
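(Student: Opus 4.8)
The plan is to derive \cref{thm:structurecov} directly from \cref{thm:decompalgebra} applied to the adjoint representation, after rephrasing covariance as a commutation relation. First I would note that \cref{eq:CovariantQC} says exactly that the linear map $T$ commutes with every element of the adjoint representation, $T\circ\cU_g = \cU_g\circ T$ for all $g\in G$, where $\cU_g(X)=U_g X U_g^\dagger$. Equivalently, regarding $T$ as an element of $\End(\cM_d)$, we have $T\in\cA(\cU)'$, the commutant of the algebra generated by $\{\cU_g\}_{g\in G}$.

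Next I would make the equivalence between $\cU$ and $U\otimes\bar U$ precise and, crucially, unitary. The vectorization map $\opv:\cM_d\to\bbC^d\otimes\bbC^d$, $\ketbra{i}{j}\mapsto\ket{i}\otimes\ket{j}$, is a unitary once $\cM_d$ is equipped with the Hilbert--Schmidt inner product, and under it $\cU_g$ is carried to $U_g\otimes\bar U_g$ because $(U_g^\dagger)^{\mathsf{T}}=\bar U_g$. Hence $\opv$ conjugates $\cA(\cU)$ onto $\cA(U\otimes\bar U)$ as concrete algebras of operators on the $d^2$-dimensional Hilbert space $\cM_d\simeq\bbC^{d^2}$, and the decomposition into irreps $\oplus_{\alpha\in\hat{G}}(\bbC^{d_\alpha}\otimes\bbC^{m_\alpha})$ appearing in the statement is exactly the one for $U\otimes\bar U$. (Alternatively, one could phrase the same step through the Choi--Jamiolkowski state $\tau_T$, which the excerpt already identifies as commuting with $U_g\otimes\bar U_g$.)

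Then I would simply invoke \cref{thm:decompalgebra} for the representation $U\otimes\bar U$: its commutant consists precisely of operators of the form $\oplus_{\alpha\in\hat{G}}\bbI_{d_\alpha}\otimes B_\alpha$ with $B_\alpha\in\cM_{m_\alpha}$. Transporting this description back through $\opv$ and using $T\in\cA(\cU)'$ yields the asserted form of $T$. Complete positivity and trace preservation are not used anywhere --- the statement holds verbatim for any linear $T$ that is covariant --- so only linearity of $T$ enters. The one step deserving genuine care is the second: one must verify that $\opv$ really is an isometry for the relevant inner products and that it intertwines $\cU_g$ with $U_g\otimes\bar U_g$ simultaneously for all $g$, so that the block structure supplied by \cref{thm:decompalgebra} is transported faithfully rather than merely up to an abstract isomorphism; once this intertwining is established the remainder is immediate.
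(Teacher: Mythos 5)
Your proposal is correct and follows essentially the same route as the paper: covariance is rephrased as $T\in\cA(\cU)'$, the adjoint representation is identified with $U\otimes\bar U$ (a point the paper already makes via the Choi--Jamiolkowski isomorphism in the preliminaries), and \cref{thm:decompalgebra} then gives the block form. The extra care you take in verifying that vectorization is a unitary intertwiner is a welcome elaboration of a step the paper leaves implicit, but it does not change the argument.
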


\begin{proof} 
As $T$ is covariant, it must  be an element of the commutant of the adjoint representation, i.e. $T\in\cA(\cU)'$. 
The decomposition then follows from \cref{thm:decompalgebra}.
\end{proof}

This decomposition further simplifies
when no multiplicities in the decomposition of the unitary representations into its irreducible components are present.
We call such channels irreducibly covariant.
Here we briefly mention some of the results of~\cite{Mozrzymas_2017}, where the structure of such channels is investigated. 

\begin{thm}[{\cite[Theorem 40]{Mozrzymas_2017}}]\label{thm:structirredcov}
A quantum channel $T: \cM_d \to \cM_d$ is irreducibly covariant w.r.t. an irrep $U:G\to\cM_d$
of a finite group $G$ if and only if it has a decomposition of the following form:
\begin{equation}
T = l_{\id} P^{\id} + \sum_{\alpha \in \hat{G}, \alpha \neq \id} l_\alpha P^\alpha,
\label{eq:DecompositionQC}
\end{equation}
with $l_{\id}=1$, $l_\alpha \in \bbC$ and where $P^{\id}, P^{\alpha}: \cM_d \to \cM_d$ are projectors defined as
\begin{equation}\label{def:projec}
P^\alpha (\cdot) = \frac{\chi^\alpha(e)}{|G|} \sum_{g \in G} \chi^\alpha \left( g^{-1}\right) U_g \cdot U_g^\dagger,
\end{equation}
with $\alpha \in \hat{G}$ and $e\in G$ the identity of the group. They have the following properties:
\begin{equation}
P^\alpha P^\beta = \delta_{\alpha \beta}P^\alpha, \qquad
(P^\alpha)^\ast = P^\alpha \ \ \text{ and} \qquad
\sum_{\alpha \in \hat{G}} P^\alpha = \id_d,  
\end{equation}
where $\id_d:\cM_d \to \cM_d$ is the identity map and
the coefficients $l_\alpha$ are the eigenvalues of the quantum channel $T$.
\end{thm}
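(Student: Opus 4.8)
The plan is to obtain the decomposition as an immediate consequence of the commutant structure recorded in \cref{thm:decompalgebra}, the one extra ingredient being that trace preservation fixes the coefficient $l_{\id}$. As noted just before \cref{thm:structurecov}, a channel $T$ is covariant w.r.t.\ $U$ exactly when it commutes with the adjoint representation $\cU_g(\cdot)=U_g\cdot U_g^\dagger$, i.e.\ $T\in\cA(\cU)'$, and $\cU$ is equivalent to $U\otimes\bar{U}$. The word ``irreducibly'' in the hypothesis means precisely that $\cU\cong U\otimes\bar{U}$ decomposes into pairwise inequivalent irreps --- a genuine restriction, since $U\otimes\bar{U}$ need not be multiplicity-free for a general irrep $U$ --- so that in the notation of \cref{thm:structurecov} every multiplicity is $m_\alpha\in\{0,1\}$; moreover the trivial irrep occurs with multiplicity exactly one, because $U\otimes\bar{U}\cong\End(U)$ as a $G$-representation and $\End_G(U)$ is one-dimensional by Schur's lemma. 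Applying \cref{thm:decompalgebra} to $\cU$ with all $m_\alpha\le 1$ then yields $\cA(\cU)'=\spn\{P^\alpha:\alpha\in\hat{G}\}$, where $P^\alpha$ denotes the orthogonal projection of $\cM_d$ onto its $\alpha$-isotypic subspace under $\cU$ (and $P^\alpha=0$ when $m_\alpha=0$).

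The second step is to check that this isotypic projector is exactly the operator written in \cref{def:projec}. The standard character formula for the projection onto the $\alpha$-isotypic component of a representation $\rho$ of $G$ is $\tfrac{d_\alpha}{|G|}\sum_{g\in G}\overline{\chi^\alpha(g)}\,\rho(g)$; substituting $\rho=\cU$, $d_\alpha=\chi^\alpha(e)$ and $\overline{\chi^\alpha(g)}=\chi^\alpha(g^{-1})$ (which holds since $U^\alpha$ is unitary) reproduces \cref{def:projec}. The identities $P^\alpha P^\beta=\delta_{\alpha\beta}P^\alpha$ and $\sum_\alpha P^\alpha=\id_d$ are the usual orthogonality and completeness of isotypic projectors, and self-adjointness with respect to the Hilbert--Schmidt inner product follows from the facts that each $\cU_g$ is Hilbert--Schmidt unitary with $(\cU_g)^\ast=\cU_{g^{-1}}$ and that $\chi^\alpha(e)$ is real: one computes $(P^\alpha)^\ast=\tfrac{\chi^\alpha(e)}{|G|}\sum_g\chi^\alpha(g)\,\cU_{g^{-1}}$ and recovers $P^\alpha$ after reindexing $g\mapsto g^{-1}$. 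Hence $T$ is covariant if and only if $T=\sum_{\alpha\in\hat{G}}l_\alpha P^\alpha$ with $l_\alpha\in\bbC$, which is \cref{eq:DecompositionQC}; and since $T$ restricts to multiplication by $l_\alpha$ on the range of $P^\alpha$, the $l_\alpha$ are exactly the eigenvalues of $T$.

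Finally I would use trace preservation to show $l_{\id}=1$, the only point at which being a channel, rather than merely covariant, is used. A direct computation gives $\tr{P^\alpha(X)}=\tfrac{\chi^\alpha(e)}{|G|}\sum_g\chi^\alpha(g^{-1})\tr{U_gXU_g^\dagger}=\chi^\alpha(e)\Big(\tfrac{1}{|G|}\sum_g\overline{\chi^\alpha(g)}\Big)\tr{X}=\delta_{\alpha,\id}\tr{X}$ by orthogonality of characters, so $\tr{T(X)}=l_{\id}\tr{X}$ for every $X\in\cM_d$, and trace preservation forces $l_{\id}=1$. Conversely, any $T$ of the form \cref{eq:DecompositionQC} lies in $\cA(\cU)'$ and is therefore covariant, and is trace preserving by the same computation, so together with the standing assumption of complete positivity it is an irreducibly covariant quantum channel. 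I do not expect a serious obstacle here: the substance of the argument is setting up the dictionary between covariant channels, the commutant of the adjoint representation, and the character projectors correctly --- in particular being careful with the interplay of complex conjugation, group inversion, and the Hilbert--Schmidt adjoint when verifying that \cref{def:projec} is a self-adjoint isotypic projector --- after which the statement drops out of \cref{thm:decompalgebra}.
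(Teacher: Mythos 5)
Your argument is correct, but note that the paper does not actually prove this statement: it is imported verbatim as a citation to Mozrzymas et al.\ (Theorem 40), so there is no in-paper proof to match. What you have written is a self-contained derivation from material the paper does already set up, and it holds together. You correctly read off the paper's (somewhat terse) definition that ``irreducibly covariant'' means the adjoint representation $\cU\cong U\otimes\bar U$ is multiplicity-free, which is exactly the hypothesis under which \cref{thm:decompalgebra} collapses the commutant to the span of the isotypic projectors; your observation that the trivial irrep appears with multiplicity one via $\End_G(U)\cong\bbC$ is the right way to see that $P^{\id}$ is the nonzero projector onto the span of $\bbI$. The identification of \cref{def:projec} with the standard character projector, the verification of orthogonality, completeness, and Hilbert--Schmidt self-adjointness (with the correct bookkeeping of $\overline{\chi^\alpha(g^{-1})}=\chi^\alpha(g)$ and $(\cU_g)^\ast=\cU_{g^{-1}}$), and the use of character orthogonality to show $\tr{P^\alpha(X)}=\delta_{\alpha,\id}\tr X$, hence $l_{\id}=1$ from trace preservation, are all sound, as is the converse direction. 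Compared with simply citing the reference, your route has the advantage of staying entirely within the paper's own toolkit (\cref{thm:decompalgebra} plus elementary character theory), at the cost of a page of standard representation-theoretic bookkeeping; the only cosmetic caveat is that for irreps $\alpha$ not occurring in $U\otimes\bar U$ the projector $P^\alpha$ vanishes and the corresponding $l_\alpha$ is then not an eigenvalue of $T$, a degenerate case the theorem statement glosses over and which you already flag.
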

That is, in the case of an irreducibly covariant channel we can also write down the projections onto different eigenspaces and
diagonalize the channel.

One of the most important concepts in this paper is that of the twirl of a channel.
\begin{defn}[Twirl]
Let $T:\cM_d\to\cM_d$ be a quantum channel, $G$ a finite group with Haar measure $\mu$
and $U:U\to\cM_d$ a unitary representation of $G$. We define the twirl of $T$ w.r.t. $G$, denoted by $\mathcal{T}(T):\cM_d\to\cM_d$, as
\begin{equation}
\mathcal{T}(T)(\cdot)=\int_G \cU_g^*\circ T\circ \cU_g(\cdot)\rmd \mu. 
\end{equation}
\end{defn}
Strictly speaking the twirled channel, of course, depends on the particular group and unitary representation at hand. However, 
we will
omit this in the notation, as the group in question should always be clear from context.
It is then easy to show that $\mathcal{T}(T)$ is a quantum channel that is covariant w.r.t. this representation.

\subsection{Random Walks on Groups}
\label{sec:markov}
We will need some basic tools from the field of random walks on groups to motivate and explain our protocol to perform
RB with generators or with approximate samples. Therefore, we review these basic concepts here and 
refer to e.g.~\cite[Chapter 2.6]{markovmixing} for more details and proofs.
Given a finite group $G$, we denote the set of probability measures on $G$
by $\mathcal{P}(G)$. If $X,Y$ are two independent random variables on $G$ with distributions $\mu,\nu\in\mathcal{P}(G)$,
respectively, we denote their joint distribution on $G\times G$ by $\mu\otimes\nu$. Analogously, we will denote 
the joint distribution of $Y_1,\ldots,Y_n$ i.i.d. variables with distribution $\nu$ by $\nu^{\otimes n}$ and the $m$-fold
Cartesian product of $G$ with itself by $G^m$.
The random
walk on G with increment distribution $\nu$ is defined as follows: 
it is a Markov chain with
state space $G$. Given that the current state $X_n$ of the chain is $g$, the next state $X_{n+1}$
is given by multiplying the current state on the left by a
random element of $G$ selected according to $\nu$.
That is, we have
\begin{equation}
P(X_{n+ 1}=g_2|X_{n}=g_1)=\nu\lb g_2g_1^{-1}\rb.
\end{equation}
Another way of tracking the transition probabilities for these chains is through the transition matrix of the chain, $\pi$.
For $g_1,g_2\in G$, this matrix is defined as
\begin{equation}
 \pi(g_1,g_2)=\nu\lb g_2g_1^{-1}\rb.
\end{equation}
If $X_0$ is distributed according to $\mu\in\mathcal{P}(G)$, we have that the distribution of $X_n$ is given by
$\pi^n\mu$, where we just expressed $\mu$ as a vector in $\bbR^{|G|}$.
We recall the following fundamental result about random walks on groups:
\begin{thm}
Let $G$ be a finite group and $A$ be a set of generators of $G$ that is closed under inversion.
Moreover, let $\nu$ be the uniform distribution on $A$ and 
$X_1,X_2,\ldots$ be a random walk with increment distribution $\nu$.
Then the distribution of $X_n$ converges to the Haar distribution on $G$ as $n\to\infty$.
\end{thm}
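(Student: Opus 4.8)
The plan is to recognize the walk as an irreducible, aperiodic Markov chain on the finite set $G$ and to invoke the standard convergence theorem. As recalled in the text, if $X_0$ has law $\mu\in\mathcal{P}(G)$ then $X_n$ has law $\pi^n\mu$, where $\pi(g_1,g_2)=\nu(g_2g_1^{-1})$. Since $G$ is finite it suffices to verify three things: that the uniform law $u$ is stationary for $\pi$, that the chain is irreducible, and that it is aperiodic; the convergence theorem for finite, irreducible, aperiodic Markov chains (\cite[Chapter 2.6]{markovmixing}) then gives $\pi^n\mu\to u$ for \emph{every} initial law $\mu$, which is exactly the assertion.

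Two of these three ingredients are routine. Stationarity of $u$ is immediate from the group structure: $\sum_{g_1\in G}u(g_1)\pi(g_1,g_2)=\frac1{|G|}\sum_{g_1\in G}\nu(g_2g_1^{-1})=\frac1{|G|}\sum_{h\in G}\nu(h)=\frac1{|G|}=u(g_2)$, where we used that $g_1\mapsto g_2g_1^{-1}$ is a bijection of $G$. For irreducibility, fix $g_1,g_2\in G$; since $A$ generates $G$ we may write $g_2g_1^{-1}=a_k\cdots a_1$ with $a_i\in A$, and then the path $g_1\to a_1g_1\to a_2a_1g_1\to\cdots\to g_2$ has probability at least $|A|^{-k}>0$, so $\pi^k(g_1,g_2)>0$ and all states communicate.

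The remaining step, which I expect to be the main obstacle, is aperiodicity. Since $A$ is closed under inversion, the chain can return from any $g$ to $g$ in two steps (increment by some $a\in A$, then by $a^{-1}$), so the period of the chain divides $2$; the only thing left to exclude is period exactly $2$. Period $2$ occurs precisely when the length-parity of products of elements of $A$ is well defined on $G$ and yields a surjective homomorphism $\phi\colon G\to\bbZ/2\bbZ$ with $\phi(A)=\{1\}$ --- equivalently, when $A$ is contained in the non-identity coset of an index-$2$ subgroup of $G$ --- and conversely any such $\phi$ clearly makes the walk $2$-periodic. Hence one needs the mild additional hypothesis that $A$ is not contained in such a coset; this holds automatically when $e\in A$ (the case relevant to the protocol), and otherwise should be read as part of the hypotheses. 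Granting it, the chain is aperiodic and the three ingredients combine to finish the proof. As an aside, a cleaner variant that also yields an explicit convergence rate is non-abelian Fourier analysis: the $n$-step convolution operator splits over $\hat G$ into blocks $\left(|A|^{-1}\sum_{a\in A}U^\alpha_a\right)^n$, the trivial irrep contributing the fixed vector $u$ with eigenvalue $1$; for each nontrivial $\alpha$ one shows $\norm{|A|^{-1}\sum_{a\in A}U^\alpha_a}<1$ --- excluding modulus-$1$ eigenvalues by the same triangle-inequality-plus-irreducibility argument that excludes period $2$ --- so these blocks vanish as $n\to\infty$ and Fourier inversion again gives $\pi^n\mu\to u$.
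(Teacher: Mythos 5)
Your argument is the standard convergence proof for finite Markov chains (stationarity of the uniform measure, irreducibility from the generating property, aperiodicity), which is exactly what the paper's cited source supplies --- the paper itself gives no proof, deferring entirely to \cite[Section 2.6.1]{markovmixing}. All three of your verifications are correct, and your characterization of when period $2$ occurs (length-parity descends to a surjective homomorphism $G\to\bbZ/2\bbZ$ sending $A$ to the nontrivial class, i.e.\ $A$ lies in the non-identity coset of an index-$2$ subgroup) is right. More importantly, the obstacle you flag is not an artifact of your method but a genuine defect in the statement as written: take $G=\bbZ/2\bbZ$ and $A=\{1\}$, which generates $G$ and is closed under inversion, yet the walk is deterministic with period $2$ and the law of $X_n$ oscillates between the two point masses and does not converge. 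So the theorem needs the mild extra hypothesis you identify (e.g.\ $e\in A$, or $A$ meeting the kernel of every index-$2$ character, or passing to the lazy walk), or else convergence must be read in the Ces\`aro sense. Since the rest of the paper only uses mixing-time bounds for chains that are assumed to mix, this does not propagate further, but you are right to insist on the hypothesis rather than paper over it; your Fourier-analytic aside is also a valid alternative and gives quantitative rates, at the cost of needing the same spectral-gap/aperiodicity input.
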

\begin{proof}
We refer to e.g.~\cite[Section 2.6.1]{markovmixing} for a proof and more details on this.
\end{proof}
Given a generating subset $A$  of $G$ that is closed under inverses and $\nu$ the uniform distribution on $A$, 
we will refer to the random walk with increment $\nu$ as the random walk generated by $A$.
This result provides us with an easy way of obtaining samples which are approximately Haar distributed if we have a set of generators
by simulating this random walk for long enough. 
The speed of this convergence is usually quantified in the total variation distance. Given two probability measure $\mu,\nu$ on
$G$, we define their total variation distance to be given by:
\begin{align}
\label{eqn:TVdistance}
\|\mu-\nu\|_1:=\frac{1}{2}\sum\limits_{g\in G}|\mu(g)-\nu(g)|. 
\end{align}
We then define the mixing time of the random walk as follows:
\begin{defn}[Mixing Time of Random Walk]
Let $G$ be a finite group and $A$ a set of generators closed under inverses and $\mu$ be the Haar measure 
on the group. For $\epsilon>0$, the mixing time of the chain generated by $A$, $t_1(\epsilon)$, is defined as 
\begin{equation}
 t_1(\epsilon):=\inf\{n\in\bbN|\forall \nu\in\mathcal{P}(G):\|\pi^n\nu-\mu\|_1\leq\epsilon\}.
\end{equation}
\end{defn}
We set $t_{\text{mix}}$ to be given by $t_1(4^{-1})$, as this is a standard choice in literature~\cite[Section 4.5]{markovmixing}. One can then show that $t_1(\epsilon)\leq\lceil\log_2\lb\epsilon ^{-1}\rb \rceil t_{\text{mix}}$
(see~\cite[Section 4.5]{markovmixing} for a proof).
There is a huge literature devoted to determining the mixing time of random walks on groups and we refer to~\cite{saloff2004random}
and references therein for more details.
For our purposes it will be enough to note that in most cases we have that $t_1(\epsilon)$ scales logarithmically
with $\epsilon^{-1}$ and $|G|$. 
Another distance measure which is quite useful in the study of convergence of random variables is the relative entropy $D$.
For two probabilities measures $\mu,\nu$ on $\{1,\ldots,d\}$ we define their relative entropy to be
\begin{align}
D(\mu||\nu):=
\begin{cases}
\sum\limits_{i=1}^d\mu(i)\log\lb\frac{\mu(i)}{\nu(i)}\rb,\quad\textrm{if }\mu(i)=0\textrm{ for all }i\textrm{ s.t. }\nu(i)=0,\\
+\infty,\quad\textrm{else.}
\end{cases}
\end{align}
One of its main properties is that for $\mu,\nu\in\mathcal{P}(G)$ we have~\cite{Wehrl_1978}
\begin{align}
D(\mu^{\otimes n}||\nu^{\otimes n})=nD(\mu||\nu). 
\end{align}

\section{Fidelities}
\label{sec:Fidelities}
Given a quantum channel $T:\cM_d\to\cM_d$ and a unitary channel $\mathcal{U}:\cM_d\to\cM_d$,
 the average fidelity between them is defined as
\begin{align}
F(T,\mathcal{U})=\int\tr{T(\ketbra{\psi}{\psi})\mathcal{U}(\ketbra{\psi}{\psi})}\rmd \psi, 
\end{align}
where we are integrating over the Haar measure on quantum states.
In case $\mathcal{U}$ is just the identity, we refer to this quantity as being the average fidelity of the channel and denote
it by $F(T)$.
As shown in~\cite{Nielsen_2002}, the average fidelity of a channel is a simple function of its entanglement fidelity, given by
\begin{align}
F_e(T)=\tr{T\otimes\id\lb\ketbra{\Omega}{\Omega}\rb\ketbra{\Omega}{\Omega}},
\end{align}
with $\ketbra{\Omega}{\Omega}$ the maximally entangled state.
One can then show that 
\begin{align}
F(T)=\frac{dF_e(T)+1}{d+1}. 
\end{align}
Thus, we focus on estimating the entanglement fidelity instead of estimating the average fidelity. This can be seen
to be just a function of the trace of the channel and the dimension, as we now show.
\begin{lem}\label{lem:entangfidprojec}
Let $T:\cM_d\to\cM_d$ be a quantum channel.
Then $F_e(T)=d^{-2}\tr{T}$.
Here we mean the trace of $T$ as a linear operator between the vector spaces $\cM_d$.
\end{lem}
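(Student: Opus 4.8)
The plan is to compute the trace of $T$ directly by expanding it in a convenient orthonormal basis of $\cM_d$ with respect to the Hilbert-Schmidt inner product, and to recognize the resulting sum as the entanglement fidelity. First I would fix the standard matrix units $\{e_{ij}\}_{i,j=1}^d$ with $e_{ij} = \ketbra{i}{j}$, which form an orthonormal basis of $\cM_d$ since $\tr{e_{ij}^\dagger e_{kl}} = \delta_{ik}\delta_{jl}$. By definition of the trace of a linear operator on this inner product space,
\begin{equation}
\tr{T} = \sum_{i,j=1}^d \tr{e_{ij}^\dagger\, T(e_{ij})} = \sum_{i,j=1}^d \bra{i} T(\ketbra{i}{j}) \ket{j}.
\end{equation}

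Next I would expand the entanglement fidelity side. Using \cref{eq:Omega}, write $\ketbra{\Omega}{\Omega} = \frac{1}{d}\sum_{i,j} \ketbra{ii}{jj}$, so that applying $T\otimes\id_d$ and pairing against $\ketbra{\Omega}{\Omega}$ again gives
\begin{equation}
F_e(T) = \tr{(T\otimes\id_d)(\ketbra{\Omega}{\Omega})\,\ketbra{\Omega}{\Omega}} = \frac{1}{d^2}\sum_{i,j=1}^d \tr{\big(T(\ketbra{i}{j})\otimes\ketbra{i}{j}\big)\,\ketbra{\Omega}{\Omega}},
\end{equation}
and then I would use $\bra{\Omega}(A\otimes\ketbra{i}{j})\ket{\Omega} = \frac{1}{d}\bra{j} A \ket{i}$, valid for any $A\in\cM_d$ (this follows from $\ket{\Omega} = \frac{1}{\sqrt d}\sum_k \ket{kk}$). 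Carefully tracking indices, the double sum collapses to $\frac{1}{d^2}\sum_{i,j}\bra{i}T(\ketbra{i}{j})\ket{j}$, which is exactly $d^{-2}\tr{T}$ by the previous display. Matching the two expressions yields $F_e(T) = d^{-2}\tr{T}$.

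The only place requiring care — and the main obstacle — is bookkeeping the indices in the entanglement fidelity expansion: there are two copies of the maximally entangled state, each carrying a double sum, and one must be attentive that the contraction of $\id_d$ on the second tensor factor forces the index identifications that make three of the four summation indices coincide. It is cleanest to use the "vec" correspondence $\opv(\ketbra{i}{j}) = \ket{i}\otimes\ket{j}$ and the identity $\tr{A^\dagger B} = \opv(A)^\dagger \opv(B)$, under which $\ket{\Omega} = \frac{1}{\sqrt d}\opv(\bbI_d)$ and the computation becomes a one-line manipulation; but either route is routine once the basis is chosen. Note the lemma does not even use complete positivity or trace preservation — it holds for any linear map $T:\cM_d\to\cM_d$ — so no properties of quantum channels enter the argument.
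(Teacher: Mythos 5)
Your argument is correct and follows essentially the same route as the paper's proof: both expand $F_e(T)$ over the matrix units $\ketbra{i}{j}$, collapse the double sum coming from the two copies of $\ketbra{\Omega}{\Omega}$, and recognize the result as the Hilbert--Schmidt trace $\sum_{i,j}\tr{T\lb\ketbra{i}{j}\rb\lb\ketbra{i}{j}\rb^\dagger}$; you merely compute $\tr{T}$ first where the paper identifies it last, and your observation that neither complete positivity nor trace preservation is used is accurate. One index slip worth fixing: with $\ket{\Omega}=\frac{1}{\sqrt d}\sum_k\ket{kk}$ the correct identity is $\bra{\Omega}\lb A\otimes\ketbra{i}{j}\rb\ket{\Omega}=\frac{1}{d}\bra{i}A\ket{j}$, not $\frac{1}{d}\bra{j}A\ket{i}$; as literally written your identity would collapse the sum to $\frac{1}{d^2}\sum_{i,j}\bra{j}T(\ketbra{i}{j})\ket{i}$, which is the trace of $T$ composed with the transpose map and not $d^{-2}\tr{T}$ in general, whereas the collapsed expression you then state, $\frac{1}{d^2}\sum_{i,j}\bra{i}T(\ketbra{i}{j})\ket{j}$, is the right one and does equal $d^{-2}\tr{T}$.
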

\begin{proof}
The entanglement fidelity is
\begin{align*}
F_e(T)&=\tr{T\otimes\id\lb\ketbra{\Omega}{\Omega}\rb\ketbra{\Omega}{\Omega}}\\
&=\frac{1}{d^2}\sum\limits_{i,j,k,l=1}^d\tr{\left[T\lb\ketbra{i}{j}\rb\otimes\ketbra{i}{j}\right]\ketbra{l}{k}\otimes\ketbra{l}{k}}\\
&=\frac{1}{d^2}\sum\limits_{i,j=1}^d\tr{T\lb\ketbra{i}{j}\rb\lb\ketbra{i}{j}\rb^\dagger}.
\end{align*}
Note that $\{\ketbra{i}{j}\}_{i,j=1}^d$ is an orthonormal basis of $\cM_d$ and $\tr{T\lb\ketbra{i}{j}\rb\lb\ketbra{i}{j}\rb^\dagger}$
corresponds to the Hilbert-Schmidt scalar product between $T\lb\ketbra{i}{j}\rb$ and $\ketbra{i}{j}$. Therefore, we have that 
\begin{align*}
\sum\limits_{i,j=1}^d\tr{T\lb\ketbra{i}{j}\rb\lb\ketbra{i}{j}\rb^\dagger}=\tr{T}, 
\end{align*}
where again $\tr{T}$ is meant as the trace of $T$ as a linear operator.
\end{proof}

That is, if we know the eigenvalues or the diagonal elements of $T$ w.r.t. some basis, we may determine its entanglement and average fidelity.
The RB protocol explores the fact that twirling a channel does not change its trace and that the trace
of covariant channels has a much simpler structure, as made clear in the next corollary.
\begin{cor}\label{cor:formulatracecov}
Let $T: \cM_d \to \cM_d$ be a quantum channel that is covariant w.r.t. a unitary representation $U: G \to \bbC^d$ 
of a finite group $G$
and let $\oplus_{\alpha\in\hat{G}} \left( \bbC^{d_{\alpha}} \otimes \bbC^{m_\alpha}\right)$
be the decomposition of $\bbC^d\otimes\bbC^d$ into irreps $\alpha$ of $G$ with multiplicity $m_\alpha$ for the unitary representation $U\otimes\bar{U}$. Choose a basis s.t. 
\begin{align}\label{equ:decompcovariant1}
T=\oplus_{\alpha\in\hat{G}}  \bbI_{d_{\alpha}} \otimes B_{m_\alpha} 
\end{align}
with $B_\alpha\in\cM_{m_\alpha}$. Then
\begin{align}
F_e(T)=d^{-2}\sum\limits_{\alpha\in\hat{G}}d_\alpha\tr{B_\alpha}. 
\end{align}

\end{cor}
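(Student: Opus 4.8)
The plan is to combine \cref{lem:entangfidprojec} with the block decomposition \cref{equ:decompcovariant1} supplied by \cref{thm:structurecov}. By \cref{lem:entangfidprojec} we have $F_e(T)=d^{-2}\tr{T}$, where the trace is taken with respect to $T$ viewed as a linear operator on the vector space $\cM_d$. So the entire task reduces to computing this operator trace using the given direct-sum form.

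First I would observe that the trace of a linear operator is basis-independent, so we are free to use the basis in which \cref{equ:decompcovariant1} holds, i.e.\ the basis of $\cM_d\cong\bbC^d\otimes\bbC^d$ adapted to the decomposition $\oplus_{\alpha\in\hat G}\left(\bbC^{d_\alpha}\otimes\bbC^{m_\alpha}\right)$ of the adjoint representation $U\otimes\bar U$. In this basis $T$ is block diagonal, so its trace is the sum over $\alpha$ of the traces of the blocks $\bbI_{d_\alpha}\otimes B_\alpha$ acting on $\bbC^{d_\alpha}\otimes\bbC^{m_\alpha}$.

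Next I would use the elementary fact that for a tensor product of operators the trace factorizes: $\tr{\bbI_{d_\alpha}\otimes B_\alpha}=\tr{\bbI_{d_\alpha}}\cdot\tr{B_\alpha}=d_\alpha\tr{B_\alpha}$. Summing over $\alpha\in\hat G$ then gives $\tr{T}=\sum_{\alpha\in\hat G}d_\alpha\tr{B_\alpha}$, and dividing by $d^2$ yields the claimed formula $F_e(T)=d^{-2}\sum_{\alpha\in\hat G}d_\alpha\tr{B_\alpha}$.

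There is no real obstacle here — the result is essentially a bookkeeping corollary of \cref{lem:entangfidprojec} and \cref{thm:structurecov}. The only point that deserves a word of care is being explicit that the trace in \cref{lem:entangfidprojec} is the operator trace on $\cM_d$ (not the matrix trace of a $d\times d$ matrix), and that the block form \cref{equ:decompcovariant1} refers to the decomposition of $\cM_d$ under the \emph{adjoint} representation $U\otimes\bar U$ rather than under $U$ itself; once that identification is in place, the computation is immediate.
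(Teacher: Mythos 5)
Your proof is correct and follows exactly the route the paper takes: combine \cref{lem:entangfidprojec} with the block decomposition from \cref{thm:structurecov} and compute the operator trace blockwise. The paper states this in one line; your version simply spells out the bookkeeping.
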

\begin{proof}
The claim follows immediately after we combine \cref{thm:structurecov} and \cref{lem:entangfidprojec}. 
\end{proof}

This shows that the spectrum of quantum channels that are covariant w.r.t. a unitary representation of a finite group
has much more structure and is simpler than that of general quantum channels. In particular, if the unitary representation
$U\otimes\bar{U}$ is such that $\sum_\alpha m_\alpha\ll d^2$, then we know that the spectrum of the quantum channel is 
highly degenerate and we only need to know a few points of it to estimate the trace.
We will explore this fact later in the implementation of the RB protocol.

We will now show in \cref{lem:expodecay} that the probability of measurement outcomes has a very simple form
for covariant channels and their powers.

\begin{lem}\label{lem:expodecay}
Let $T: \cM_d \to \cM_d$ be a quantum channel that is covariant w.r.t. a unitary representation 
$U: G\to\cM_d$ of a finite group $G$
and let $\oplus_{\alpha\in\hat{G}} \left( \bbC^{d_{\alpha}} \otimes \bbC^{m_\alpha}\right)$
be the decomposition of $\bbC^d\otimes\bbC^d$ into irreps $\alpha$ of $G$ with multiplicity $m_\alpha$ for the unitary representation $U\otimes\bar{U}$.
Moreover, let $\rho\in\calD_d$, $E\in\cM_d$ be a POVM element and $m\geq\max m_\alpha$.
Then there exist $\lambda_1,\ldots,\lambda_k\in\overline{B_1(0)}$, the unit ball in the complex plane, and $a_0,a_1,\ldots,a_k\in\bbC$ s.t.
\begin{align}\label{equ:decompcovariant}
\tr{T^m(\rho)E}=a_0+\sum\limits_{i=1}^ka_k\lambda_i^m.
\end{align}
Moreover, 
\begin{align}
k\leq\sum\limits_{\alpha\in\hat{G}}m_\alpha-1 
\end{align}
corresponds to the number of distinct eigenvalues of $T$ and $\lambda_i$ are its eigenvalues.
\end{lem}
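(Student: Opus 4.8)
The plan is to diagonalise $T$ block by block using the structure theorem and then read the claimed expansion off the spectral decomposition. By \cref{thm:structurecov} we may fix a basis of $\cM_d$ in which $T=\oplus_{\alpha\in\hat{G}}\bbI_{d_\alpha}\otimes B_\alpha$ with $B_\alpha\in\cM_{m_\alpha}$; since the $\bbI_{d_\alpha}$ factors are inert under composition, $T^m=\oplus_{\alpha\in\hat{G}}\bbI_{d_\alpha}\otimes B_\alpha^m$ for every $m\in\bbN$. In particular $\spec(T)=\bigcup_{\alpha\in\hat{G}}\spec(B_\alpha)$, so $T$ has at most $\sum_{\alpha\in\hat{G}}m_\alpha$ distinct eigenvalues. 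Because $T$ is a quantum channel it is a contraction for the trace norm, $\|T\|_{1\to1}=1$, so its spectral radius is at most $1$ and every eigenvalue lies in $\overline{B_1(0)}$; and because $T$ is trace preserving, $T^{*}(\bbI_d)=\bbI_d$, whence $1\in\spec(T)$.

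Write the distinct eigenvalues of $T$ as $\lambda_0=1,\lambda_1,\dots,\lambda_k$, so that $k+1\le\sum_{\alpha\in\hat{G}}m_\alpha$. From the Jordan--Chevalley decomposition of each $B_\alpha$ one obtains $T=\sum_{i=0}^{k}\lambda_i Q_i+N$, where the $Q_i$ are the spectral idempotents ($Q_iQ_j=\delta_{ij}Q_i$, $\sum_i Q_i=\id_d$) and $N$ is nilpotent and commutes with every $Q_i$; since each $B_\alpha$ acts on an $m_\alpha$-dimensional space, $N^{\max_\alpha m_\alpha}=0$. For a diagonalisable $T$ this gives $T^m=\sum_{i=0}^{k}\lambda_i^m Q_i$, and applying the linear functional $X\mapsto\tr{X(\rho)E}$ yields
\[
\tr{T^m(\rho)E}=\sum_{i=0}^{k}\lambda_i^m\,\tr{Q_i(\rho)E}=a_0+\sum_{i=1}^{k}a_i\lambda_i^m,
\]
with $a_0:=\tr{Q_0(\rho)E}$ and $a_i:=\tr{Q_i(\rho)E}\in\bbC$, which is \eqref{equ:decompcovariant} with $k\le\sum_{\alpha\in\hat{G}}m_\alpha-1$ and the $\lambda_i$ the eigenvalues of $T$, as claimed.

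The delicate point, and the step I expect to need the most care, is the possible failure of $T$ to be diagonalisable. A Jordan block of size $s$ at an eigenvalue $\lambda$ makes $B_\alpha^m$, hence $T^m$, carry a prefactor $\binom{m}{l}\lambda^{m-l}$ for $0\le l<s$ rather than a pure power. Two observations tame this. First, the peripheral spectrum ($|\lambda|=1$) of any quantum channel is semisimple, so no Jordan blocks occur there; since it is exactly the peripheral eigenvalues that dominate the large-$m$ behaviour RB probes, assuming $T$ diagonalisable is a mild and standard hypothesis under which the stated identity is exact. Second, the hypothesis $m\ge\max_\alpha m_\alpha$ is precisely what discards any nontrivial Jordan block at the eigenvalue $0$: once $m$ exceeds every block size, $N^m=0$ on the zero-eigenspace, so that contribution drops out of $T^m$ and never enters \eqref{equ:decompcovariant}. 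If one insists on the statement without a diagonalisability assumption, one simply replaces each $\lambda_i^m$ by $p_i(m)\lambda_i^m$ for polynomials $p_i$ of degree less than the corresponding $m_\alpha$, the remaining steps going through verbatim.
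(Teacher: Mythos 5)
Your argument follows the same skeleton as the paper's: block-diagonalise $T$ via \cref{thm:structurecov}, pass to the Jordan decomposition, group terms by distinct eigenvalue, and apply the linear functional $X\mapsto\tr{X(\rho)E}$; your coefficients $a_i=\tr{Q_i(\rho)E}$ are exactly the paper's grouped matrix coefficients $b_{i,j}c_{i,j}$, and the unit-disc and $1\in\spec(T)$ observations coincide. Where you genuinely depart is in the treatment of non-diagonalisability, and there your version is the more careful one. The paper argues that $N^{m_0}=0$ with $m_0=\max_\alpha m_\alpha$ gives $X^{-1}\circ T^m\circ X=D^m$ for $m\ge m_0$; but $(D+N)^m=D^m+\binom{m}{1}D^{m-1}N+\cdots$, and the cross terms vanish only when $N$ is supported on the kernel of $D$, i.e.\ when every nontrivial Jordan block sits at the eigenvalue $0$. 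A block of size at least $2$ at an eigenvalue $\lambda$ with $0<\abs{\lambda}<1$ survives all powers of $T$ and produces exactly the prefactors $\binom{m}{l}\lambda^{m-l}$ you describe, so the identity \eqref{equ:decompcovariant} with $m$-independent coefficients genuinely needs the diagonalisability (away from $0$) that you state explicitly and the paper assumes silently. Your two mitigating remarks, semisimplicity of the peripheral spectrum and the role of $m\ge\max_\alpha m_\alpha$ in annihilating the nilpotent part at eigenvalue $0$, are both correct and are precisely what the condition on $m$ can and cannot buy. In short: same approach, a correct proof under the hypothesis you add, and you have correctly located the one step of the paper's own argument that does not go through as written; the fallback of replacing $\lambda_i^m$ by $p_i(m)\lambda_i^m$ is the right general statement.
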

\begin{proof}
As $T$ is a linear map from $\cM_d$ to $\cM_d$ it has a Jordan decomposition~\cite{Horn_2009}. That is, there exists an invertible
linear operator $X:\cM_d\to\cM_d$ such that
\begin{align*}
X^{-1}\circ T\circ X=D+N,\quad [D,N]=0.
\end{align*}
Here $D:\cM_d\to\cM_d$ is diagonal in the standard basis $\{\ketbra{i}{j}\}_{i,j=1}^d$ of $\cM_d$ with diagonal entries given
by the eigenvalues of $T$ and $N:\cM_d\to\cM_d$ nilpotent. 
As we have that $T$ is covariant, it follows from the decomposition in \cref{thm:structurecov}
that the eigenvalues can be at most $\max m_\alpha=m_0-$fold degenerate and $N^{m_0}=0$. Thus, it follows that
$T^m$ is diagonalizable, as $m\geq\max m_\alpha$. We then have
\begin{align*}
X^{-1}\circ T^m\circ{X}=D^m. 
\end{align*}
We can then rewrite the scalar product
\begin{equation*}
\tr{T^m(\rho)E}=\tr{X\circ D^m\circ X^{-1} (\rho)E}=
\tr{D^m (X^{-1}(\rho))X^\ast(E)}.
\end{equation*}
Let $b_{i,j}$ and $c_{i,j}$ be the matrix coefficient of $X^\ast(E)$ and $X^{-1}(\rho)$, respectively, in the standard basis. That is
\begin{equation*}
X^\dagger(E)=\sum\limits_{i,j=1}^db_{i,j}\ketbra{i}{j},\qquad
X^{-1}(\rho)=\sum\limits_{i,j=1}^dc_{i,j}\ketbra{i}{j}.
\end{equation*}
Exploring the fact that $D$ is diagonal in this basis we obtain 
\begin{align*}
\tr{T^m(\rho)E}= \sum\limits_{i,j=1}^db_{i,j}c_{i,j}d_{i,j}^m,
\end{align*}
where $d_{i,j}$ are just the eigenvalues of $T$, including multiplicities.
To arrive at the curve in \cref{equ:decompcovariant}, we group together all terms corresponding to the same eigenvalue $\lambda_i$.
Moreover, note that quantum channels always have $1$ in their spectrum, which gives the $a_0$ term that does not depend on $m$.
The fact that $\lambda_i\in\overline{B_1(0)}$ follows from the fact that they are given by the eigenvalues of the channel and these
are always contained in the unit circle of the complex plane~\cite{ergodicchiribella}.  
\end{proof}

Finally, we show that twirling does not change the entanglement fidelity and thus does not change the average fidelity, as
observed in~\cite{Nielsen_2002} and elsewhere in the literature.
Thus, when we want to estimate the average fidelity of a channel $T:\cM_d\to\cM_d$ we may instead work with the twirled channel
$\mathcal{T}(T)$ and explore its rich structure.
\begin{thm}\label{thm:twirlingentfid}
Let $T:\cM_d\to\cM_d$ be a quantum channel, $G$ be a finite group and $U:G\to\cM_d$ be a unitary representation.
Then
\begin{equation}
F_e(T)=F_e(\mathcal{T}(T)).
\end{equation}
\end{thm}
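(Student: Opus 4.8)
The plan is to reduce everything to \cref{lem:entangfidprojec}, which tells us that $F_e(S)=d^{-2}\tr{S}$ for any quantum channel $S$, where the trace is taken of $S$ viewed as a linear operator on the vector space $\cM_d$. Since $\mathcal{T}(T)$ is again a quantum channel (as remarked right after the definition of the twirl), it suffices to prove the single identity $\tr{\mathcal{T}(T)}=\tr{T}$; the theorem then follows by dividing by $d^2$.

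To establish $\tr{\mathcal{T}(T)}=\tr{T}$, I would first recall that for each $g\in G$ the adjoint representation map $\cU_g:\cM_d\to\cM_d$, $X\mapsto U_gXU_g^\dagger$, is unitary with respect to the Hilbert–Schmidt inner product, with inverse $\cU_g^{-1}=\cU_{g^{-1}}=\cU_g^*$. Hence for every $g$ the operator $\cU_g^*\circ T\circ\cU_g$ is a similarity transform (indeed a unitary conjugation) of $T$ on $\cM_d$, and the trace of a linear operator is invariant under conjugation, so
\begin{equation*}
\tr{\cU_g^*\circ T\circ\cU_g}=\tr{T}\qquad\text{for all }g\in G.
\end{equation*}
Next I would use linearity of the trace together with the fact that $G$ is finite, so that $\mathcal{T}(T)(\cdot)=\int_G\cU_g^*\circ T\circ\cU_g(\cdot)\,\rmd\mu$ is simply a convex combination (a finite average over the Haar/uniform measure $\mu$) of the operators $\cU_g^*\circ T\circ\cU_g$. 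Interchanging the finite sum with the trace gives
\begin{equation*}
\tr{\mathcal{T}(T)}=\int_G\tr{\cU_g^*\circ T\circ\cU_g}\,\rmd\mu=\int_G\tr{T}\,\rmd\mu=\tr{T},
\end{equation*}
since $\mu$ is a probability measure. Combining this with \cref{lem:entangfidprojec} applied to both $T$ and $\mathcal{T}(T)$ yields $F_e(T)=d^{-2}\tr{T}=d^{-2}\tr{\mathcal{T}(T)}=F_e(\mathcal{T}(T))$.

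There is essentially no hard step here: the only things to be careful about are (i) justifying that $\cU_g^*$ equals $\cU_{g^{-1}}$ so that we really have a conjugation rather than a one-sided composition — this is where unitarity of $\cU_g$ on $\cM_d$ is used — and (ii) observing that \cref{lem:entangfidprojec} is applicable to $\mathcal{T}(T)$, which requires only that $\mathcal{T}(T)$ is a genuine quantum channel (completely positive and trace preserving), already noted in the text. If one wanted to also cover compact groups, the finite sum would be replaced by an integral and one would invoke continuity/boundedness to exchange trace and integral, but in the finite-group setting this is immediate.
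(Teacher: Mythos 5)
Your proposal is correct and follows essentially the same route as the paper's proof: reduce to \cref{lem:entangfidprojec}, observe that each $\cU_g^*\circ T\circ\cU_g$ is a similarity transform of $T$ and hence has the same trace, and average over the group. You merely spell out the justifications (unitarity of $\cU_g$ on $\cM_d$, linearity of the trace over the finite average) that the paper leaves implicit.
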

\begin{proof}
We present a slightly different proof of this fact here.
Note that $\cU_g^*\circ T\circ \cU_g$ is just a similarity transformation of $T$ and
thus $\tr{\cU_g^*\circ T\circ \cU_g}=\tr{T}$, where again we mean the trace of these channels as linear operators.
Thus, integrating over all $\cU_g$ does not change the entanglement fidelity, as $F_e(T)=d^{-2}\tr{T}$.
\end{proof}

\section{Randomized benchmarking protocol}
\label{sec:RBprotocol}
The RB protocol, as discussed in~\cite{Knill_2008, Emerson_2007, Levi_2007, Emerson_2005, Helsen_2017, Wallman_2017, Proctor_2017, Gambetta_2012, Magesan_2012, Magesan_2011, Dankert_2009} is a protocol
to estimate the average fidelity of the implementation of gates coming from some group $G$.
Its usual setting is the Clifford group, but we discuss it for general groups here.
Other papers have investigated the protocol for gates beyond Cliffords, such as~\cite{Hashagen_2018,Brown_Eastin_2018, CarignanDugas_2015}.
But all of these have restricted their analysis to some other specific group. As we will see later, we
can analyze the protocol for arbitrary groups by just investigating properties of the given unitary representation.
We mostly follow the notation of~\cite{Magesan_2011}. We assume that the error quantum channel is gate and time independent.
That is, whenever we want to implement a certain gate $\cU_g$, where $\cU_g(\cdot) = U_g \cdot U_g^\dagger$ with $U_g \in U(d)$, we actually implement 
$
\cU_g \circ T
$
for some quantum channel $T:\cM_d\to\cM_d$.
We assume that we are able to multiply and invert elements of $G$ and draw samples from the Haar measure on $G$ efficiently 
to implement this protocol, but will later relax this sampling condition.
The protocol is as follows:

\begin{description}
\item[Step 1]
Fix a positive integer $m \in \bbN$ that varies with every loop. 

\item[Step 2]
Generate a sequence of $m+1$ quantum gates. The first $m$ quantum gates $\cU_{g_1}, \ldots, \cU_{g_m}$
are independent 
and Haar distributed. 
The final quantum gate, $\cU_{g_{m+1}}$  is chosen such that in the absence of errors the net sequence is just the identity operation,
\begin{equation}
\cU_{g_{m+1}} \circ \cU_{g_{m}} \circ \ldots \circ \cU_{g_2} \circ \cU_{g_1} = \id,
\end{equation}
where $\circ$ represents composition.  Thus, the whole quantum gate sequence is 
\begin{equation}
\cS_m = \bigcirc^{m+1}_{j=1} \cU_{g_j} \circ T,
\end{equation}
where $T$ is the associated error quantum channel. 

\item[Step 3]
For every sequence, measure the sequence fidelity
\begin{equation}
\tr { \cS_m (\rho)E  },
\label{eq:AverageFidelity}
\end{equation}
where $\rho$ is the initial quantum state, including preparation errors, and $E$ is an effect operator of some POVM including measurement errors. 

\item[Step 4]
Repeat steps 2-3 and average over $M$ random realizations of the sequence of length $m$ to find the averaged sequence fidelity given by
\begin{equation}
\bar{F}(m, E, \rho)= \frac{1}{M} \sum_{m} \tr {\cS_{m} (\rho)E}.
\label{eq:fidelityaverage}
\end{equation}

\item[Step 5]
Repeat steps 1-4 for different values of $m$ and
obtain an estimate of the expected value of the sequence fidelity
\begin{equation}
F(m, E, \rho)=\tr{\mathbb{E}(\cS_m)(\rho)E}.
\end{equation}
\end{description}

\subsection{Analysis of the Protocol}
\label{sec:RB}
We will now show how we can estimate the average fidelity from the data produced by the protocol, that is,
an estimate on the curve $F(m, E, \rho)=\tr{\mathbb{E}(\cS_m)(\rho)E}$.
\begin{thm}\label{thm:expecttwirl}
Let $T:\cM_d\to\cM_d$ be a quantum channel and $G$ a group with a unitary representation
$U:G\to\cM_d$. If we perform the RB protocol for $G$ we have
\begin{equation}
\mathbb{E}(\cS_m)=\mathcal{T}(T)^m.
\end{equation}
\end{thm}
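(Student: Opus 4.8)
The plan is to compute the expectation over the random sequence $\cS_m = \bigcirc_{j=1}^{m+1}\cU_{g_j}\circ T$ directly, exploiting the structure imposed by the inversion constraint $\cU_{g_{m+1}} = \left(\cU_{g_m}\circ\cdots\circ\cU_{g_1}\right)^{-1} = \cU_{g_1}^{-1}\circ\cdots\circ\cU_{g_m}^{-1}$. First I would substitute this expression for $\cU_{g_{m+1}}$ into $\cS_m$, so that the sequence becomes an alternating product of noise channels $T$ and the $\cU_{g_j}$, with the inverse gates appearing at the far left. The key observation is that each $\cU_{g_j}$ is a unitary representation of the group element $g_j$ on $\cM_d$, so $\cU_{g_j}^{-1} = \cU_{g_j}^{\ast} = \cU_{g_j^{-1}}$; this lets me interleave the factors and reorganize the product into a composition of blocks of the form $\cU_{g_j}^{\ast}\circ T\circ\cU_{g_j}$ — but with the caveat that the group elements appearing in these conjugations are not the raw samples $g_j$ but partial products, which requires care.

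The cleanest route is an inductive/telescoping argument. I would define the partial products $h_j := g_j g_{j-1}\cdots g_1$ for $j=1,\dots,m$ (so $g_{m+1} = h_m^{-1}$), and then rewrite
\begin{equation*}
\cS_m = \cU_{h_m}^{\ast}\circ\left(\bigcirc_{j=1}^{m}\cU_{h_j}\circ T\circ\cU_{h_j}^{\ast}\right)\circ\cU_{h_0}\circ T,
\end{equation*}
where $h_0 = e$; this identity is verified by expanding the composition and using $\cU_{h_j}^{\ast}\circ\cU_{h_{j-1}} = \cU_{h_{j-1}h_j^{-1}} = \cU_{g_j}^{\ast}$ together with $\cU_{h_m}^{\ast} = \cU_{g_{m+1}}$. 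The point of this rewriting is that the map $(g_1,\dots,g_m)\mapsto(h_1,\dots,h_m)$ is a bijection of $G^m$ (given $h_{j-1}$, the element $h_j$ ranges uniformly over $G$), so under the i.i.d. Haar distribution on the $g_j$ the partial products $h_1,\dots,h_m$ are themselves i.i.d. Haar distributed. Hence when I take the expectation and use linearity, each factor $\cU_{h_j}\circ T\circ\cU_{h_j}^{\ast}$ — which is $\cU_{h_j}^{\ast}{}^{-1}\circ T\circ\cU_{h_j}^{\ast}$, i.e. a conjugation of $T$ — averages independently to $\int_G \cU_g^{\ast}\circ T\circ\cU_g\,\rmd\mu = \mathcal{T}(T)$ (after matching the convention in the Twirl definition, using that averaging over $\cU_g$ is the same as averaging over $\cU_g^{\ast}$ since $g\mapsto g^{-1}$ preserves Haar measure). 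The leftmost factor $\cU_{h_m}^{\ast}$ and the innermost $\cU_{h_0}\circ T = T$ combine with the neighbouring conjugation block; careful bookkeeping shows the two "loose" unitaries at the ends cancel against the ends of the telescoped product, leaving exactly $m$ twirl factors, so $\mathbb{E}(\cS_m) = \mathcal{T}(T)^m$.

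The main obstacle is the bookkeeping in the reindexing step: one must be scrupulous about left- versus right-multiplication conventions (the protocol composes new gates on the left), about the direction of the inverse in $g_{m+1}$, and about the fact that independence of the $h_j$ is a statement about the \emph{joint} law, so the expectation factorizes only after the product has been genuinely rewritten as a composition of blocks each depending on a single $h_j$. A secondary subtlety is reconciling the conjugation $\cU_{h_j}\circ T\circ\cU_{h_j}^{\ast}$ appearing naturally in the telescoping with the form $\cU_g^{\ast}\circ T\circ\cU_g$ in the \texttt{Twirl} definition; this is harmless because the integrand's law is invariant under $g\mapsto g^{-1}$, but it should be stated explicitly. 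Once the algebraic identity and the independence of partial products are in hand, the conclusion is immediate from linearity of expectation.
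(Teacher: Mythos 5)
Your overall strategy is exactly the paper's: introduce the partial products $h_j=g_j\cdots g_1$ (the paper's $\cD_j=\cU_{h_j}$), rewrite $\cS_m$ as a composition of conjugations of $T$ by the $\cU_{h_j}$, observe that the $h_j$ are i.i.d.\ Haar distributed, and factorize the expectation. Your bijection argument for the i.i.d.-ness of the partial products is correct and is in fact more explicit than the paper, which merely asserts this. The probabilistic core of your argument is therefore sound.

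However, the displayed telescoping identity is wrong, and the step you propose to verify it fails. The adjoint representation is a homomorphism, $\cU_a\circ\cU_b=\cU_{ab}$, so $\cU_{h_j}^{\ast}\circ\cU_{h_{j-1}}=\cU_{h_j^{-1}h_{j-1}}=\cU_{h_{j-1}^{-1}g_j^{-1}h_{j-1}}$, a conjugate of $g_j^{-1}$ and not $\cU_{g_j}^{\ast}$; your computation silently treats $\cU$ as an anti-homomorphism by writing $\cU_{h_{j-1}h_j^{-1}}$. Moreover, expanding your right-hand side, the leftmost factors collapse to $\cU_{h_m}^{\ast}\circ\cU_{h_m}=\id$ rather than to $\cU_{g_{m+1}}=\cU_{h_m}^{\ast}$, so the two ``loose'' unitaries do not cancel where you claim. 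The correct identity --- the one the paper obtains by inserting $\cD_j\circ\cD_j^{\ast}=\id$ --- has the conjugations in the opposite direction and no boundary factors:
\begin{equation*}
\cS_m=\bigl(\cU_{h_m}^{\ast}\circ T\circ\cU_{h_m}\bigr)\circ\bigl(\cU_{h_{m-1}}^{\ast}\circ T\circ\cU_{h_{m-1}}\bigr)\circ\cdots\circ\bigl(\cU_{h_1}^{\ast}\circ T\circ\cU_{h_1}\bigr)\circ T,
\end{equation*}
verified via $\cU_{h_j}\circ\cU_{h_{j-1}}^{\ast}=\cU_{h_jh_{j-1}^{-1}}=\cU_{g_j}$ and $\cU_{g_{m+1}}=\cU_{h_m}^{\ast}$; the innermost $T$ (the first channel applied to the state) is absorbed as a SPAM error, a step you omit but the paper makes explicit. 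With this corrected identity each block is literally the integrand in the paper's definition of the twirl, so no appeal to inversion-invariance of the Haar measure is needed, and independence of the $h_j$ then yields $\mathbb{E}(\cS_m)=\cT(T)^m$ immediately.
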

\begin{proof}
Although the proof is identical to the case in which $G$ is given by the Clifford group, we will cover it here for completeness.
Given some sequence $\{ \cU_{g_1},\ldots,\cU_{g_{m+1}}\}$ of unitary gates from $G$, define the unitary operators 
\begin{equation}
\label{eq:DefD}
\cD_i=\bigcirc_{j=1}^i\cU_{g_i}. 
\end{equation}
Note that we have
\begin{align}\label{equ:rephrasedi}
\nonumber
\cS_m= &\cU_{g_{m+1}} \circ T \circ \cU_{g_m} \circ T \circ \ldots \circ \cU_{g_2} \circ T \circ  \cU_{g_1}    \\\nonumber
= &\overbrace{\cU_{g_{m+1}} \circ ( \cU_{g_m} \circ \ldots \circ \cU_{g_{1}}}^{= \bbI} \circ \overbrace{\cU_{g_{1}}^\ast \circ \ldots \circ \cU_{g_m}^\ast}^{=\cD_m^\ast} ) \circ T \circ \cU_{g_m} \circ T \circ  \\
&\nonumber \ldots T \circ\underbrace{ \cU_{g_3} \circ ( \cU_{g_2} \circ \cU_{g_{1}} }_{=\cD_3}\circ\underbrace{ \cU_{g_{1}}^\ast \circ \cU_{g_2}^\ast}_{=\cD_2^\ast} ) \circ T \circ \underbrace{\cU_{g_2} \circ ( \cU_{g_{1}} }_{=\cD_2}\circ 
\underbrace{\cU_{g_{1}}^\ast}_{=\cD_1^\ast} ) \circ T \circ \underbrace{ \cU_{g_1} }_{=\cD_1}   \\
\nonumber= &\cD_m^\ast \circ T \circ \cD_m \circ \ldots \circ \cD_2^\ast \circ T \circ \cD_2 \circ \cD_1^\ast \circ T \circ \cD_1    \\
= & \bigcirc_{j=1}^m \left( \cD_j^\ast \circ T \circ \cD_j \right).
\end{align}
Here we have absorbed the first channel $T$ as SPAM error.
As we have that each of the $\cU_{g_i}$ is independent and Haar-distributed, it follows that the $\cD_i$ are independent
and Haar distributed as well. It then follows from \cref{equ:rephrasedi} that
\begin{equation*}
\mathbb{E}\lb\cS_m\rb=\mathbb{E}\lb\bigcirc_{j=1}^m \left( \cD_j^\ast \circ T \circ \cD_j \right)\rb=
\bigcirc_{j=1}^m \mathbb{E}\left( \cD_j^\ast \circ T \circ \cD_j \right)=\mathcal{T}(T)^m.
\qedhere
\end{equation*}
\end{proof}

We can then use our structural results on covariant quantum channels to obtain a more explicit form for the curve  $F(m, E, \rho)$.
\begin{cor}\label{cor:shapecurve}
Suppose we perform RB for a unitary representation $U: G\to\cM_d$ of a finite group $G$ s.t.
$U\otimes\bar{U}=\oplus_{\alpha\in\hat{G}} \left( \bbC^{d_{\alpha}} \otimes \bbC^{m_\alpha}\right)$
and a channel $T$. 
Then there exist $\lambda_1,\ldots,\lambda_k\in\overline{B_1(0)}$ and $a_0,a_1,\ldots,a_k\in\bbC$ s.t.
\begin{align}\label{equ:decompcovariant2}
F(m, E, \rho)=a_0+\sum\limits_{i=1}^ka_k\lambda_i^m.
\end{align}
for $m\geq\max m_\alpha$.
Moreover, $k\leq\sum_\alpha m_\alpha$ corresponds to the number of distinct eigenvalues of $\mathcal{T}(T)$ and $\lambda_i$ are its eigenvalues.
\end{cor}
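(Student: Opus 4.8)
The plan is to chain together the three results that immediately precede the corollary: \cref{thm:expecttwirl}, the remark that the twirl of a channel is covariant, and \cref{lem:expodecay}. First I would apply \cref{thm:expecttwirl}, which gives $\mathbb{E}(\cS_m)=\mathcal{T}(T)^m$, so that $F(m, E, \rho)=\tr{\mathbb{E}(\cS_m)(\rho)E}=\tr{\mathcal{T}(T)^m(\rho)E}$.

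Next I would recall that, as noted right after the definition of the twirl, $\mathcal{T}(T)$ is again a quantum channel and is covariant w.r.t.\ $U$; hence it satisfies the hypotheses of \cref{lem:expodecay} with exactly the decomposition $U\otimes\bar U=\oplus_{\alpha\in\hat G}\left(\bbC^{d_\alpha}\otimes\bbC^{m_\alpha}\right)$ assumed here, this being the decomposition of the adjoint representation $\mathcal{U}$ whose commutant constrains covariant channels via \cref{thm:structurecov}. Applying \cref{lem:expodecay} to $\mathcal{T}(T)$, the state $\rho\in\cD_d$ and the effect operator $E$ then yields, for every $m\geq\max m_\alpha$, scalars $\lambda_1,\ldots,\lambda_k\in\overline{B_1(0)}$ and $a_0,\ldots,a_k\in\bbC$ with
\begin{equation*}
\tr{\mathcal{T}(T)^m(\rho)E}=a_0+\sum_{i=1}^k a_k\lambda_i^m,
\end{equation*}
which is precisely \cref{equ:decompcovariant2}.

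The remaining assertions---that $k$, which here is bounded by $\sum_\alpha m_\alpha$ (in fact by $\sum_\alpha m_\alpha-1$), equals the number of distinct eigenvalues of $\mathcal{T}(T)$, and that the $\lambda_i$ are exactly those eigenvalues---carry over verbatim from \cref{lem:expodecay}, using \cref{thm:structurecov} to write $\mathcal{T}(T)=\oplus_\alpha\bbI_{d_\alpha}\otimes B_\alpha$, so that every eigenvalue is at most $\max m_\alpha$-fold degenerate. Since every ingredient is already available, there is essentially no obstacle here; the one point worth a line of care is confirming that the irrep decomposition entering \cref{lem:expodecay} is indeed the one for $U\otimes\bar U$, so that the same multiplicities $m_\alpha$ govern both the threshold $m\geq\max m_\alpha$ and the bound on the number of distinct eigenvalues.
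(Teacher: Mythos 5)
Your proof is correct and follows exactly the paper's route: the paper's own proof of this corollary is a one-line combination of \cref{thm:expecttwirl} and \cref{lem:expodecay}, which is precisely what you do, with the covariance of $\mathcal{T}(T)$ supplying the hypothesis of the lemma. Your added remark that the bound is really $\sum_\alpha m_\alpha - 1$ (as stated in the lemma) is a fair observation about a minor discrepancy in the corollary's statement, but does not change the argument.
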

\begin{proof}
The claim follows immediately from \cref{thm:expecttwirl} and \cref{lem:expodecay}.
\end{proof}
That is, by fitting the curve to experimental data we may obtain estimates on the $\lambda_i$
and thus on the spectrum of $\mathcal{T}(T)$. If we know the multiplicity of each eigenvalue, then we can estimate the trace as well
and thus the average fidelity.
However, in the case in which we have more than one parameter to estimate, it is not clear which eigenvalue corresponds to
which irrep and we therefore cannot simply apply the formula in \cref{cor:formulatracecov}.
Suppose we are given an estimate $\{\hat{\lambda}_1,\ldots,\hat{\lambda}_k\}$ of the parameters sorted in decreasing order and let $d_\alpha^\uparrow$ be the dimensions 
of the irreps sorted in ascending and $d_\alpha^\downarrow$ in descending order.
We define
the minimal fidelity, $F_{\min}$, to be given by
\begin{align}\label{equ:pessfid}
F_{\min}=\frac{1}{d^2}\sum d_\alpha^\downarrow\hat{\lambda}_i  
\end{align}
and the maximum fidelity, $F_{\max}$, to be given by
\begin{align}\label{equ:optfid}
F_{\max}=\frac{1}{d^2}\sum d_\alpha^\uparrow\hat{\lambda}_i.
\end{align}
That is, we look at the pairings of $d_\alpha$ and $\hat{\lambda}_i$ that produces the largest and the smallest estimate for
the fidelity. These then give the most pessimistic and most optimistic estimate, respectively.
The fact that we cannot associate a $\lambda_i$ to each irrep causes some problems in this approach from the numerical
point of view and we comment on them in \cref{sec:numericalconsid}. We also note that since the first version of this work, a modified version of the
protocol we describe here was given in~\cite{2018arXiv180602048H}. Their protocol provides a way of isolating the individual parameters in the case of irreducibly covariant channels.

\section{Approximate Twirls}\label{sec:approxtwirls}
In the description of our RB protocol, we assume that we are able to obtain
samples from the Haar measure of the group $G$. 
It is not possible or efficient to obtain samples of the Haar measure for most groups, but a lot of research
has been done on how to obtain approximate samples efficiently using Markov chain Monte Carlo methods, as discussed in \cref{sec:markov}.
Here we discuss how to use samples which are approximately Haar distributed for RB.
Note that these results may also be interpreted as a stability result
w.r.t. not sampling exactly from the Haar measure of $G$.
We will assume we are able to pick the $\cU_{g_k}$ independently and that they are distributed according to a measure $\nu_k$ s.t.
\begin{equation}
\|\nu_k-\mu\|_1\leq\epsilon_k,
\end{equation} 
for $\epsilon_k\geq0$. Our goal is to show that under these assumptions we may still implement the RB protocol discussed before
and obtain measurement statistics that are close to the ones obtained using Haar samples.

Motivated by this, we define the $\tilde{\nu}$-twirl of a channel.
\begin{defn}[$\tilde{\nu}$-twirl to the power $m$]
Let $\tilde{\nu}$ be a probability measure on $G^m$, $T:\cM_d\to\cM_d$
a quantum channel and $U:G\to\cM_d$ a $d-$dimensional unitary representation of $G$. 
We define the $\tilde{\nu}$-twirl to the power $m$ to be given by
\begin{equation}
\mathcal{T}_{\tilde{\nu},m}(T)=\sum\limits_{i_1,\ldots,i_m=1}^{|G|}\tilde{\nu}\lb g_{i_1},\ldots,g_{i_m}\rb\bigcirc_{k=1}^m\calU_{g_{i_{k}}}\circ T\circ\calU_{g_{i_{k}}}^*.
\end{equation}

\end{defn}
This definition boils down to the regular twirl for $\tilde{\nu}=\mu^{\otimes m}$, $\mu$ the Haar measure on $G$.
We will now show that by sampling $\cU_{g_k}$ close to Haar we have that the $\tilde{\nu}$-twirl of a channel is also close
to the usual twirl.

\begin{lem}[Approximate Twirl]\label{lem:approxtwirl}
Let $T:\cM_d\to\cM_d$ be a  quantum channel, $G$ a finite group with a $d-$dimensional unitary representation $U:G\to\cM_d$ and
$\tilde{\nu}$ a probability measure on $G^m$.
Let $\mathcal{T}_{\tilde{\nu},m}(T)$ be the $\tilde{\nu}$-twirl to the power $m$ and $\mathcal{T}(T)$ be the twirl w.r.t. 
the Haar measure on $G$ given by $\mu$.  
Moreover, let $\|\cdot\|$ be a norm s.t. $\|T\|\leq1$ for all quantum channels.
Then 
\begin{equation}
\|\mathcal{T}_{\tilde{\nu},m}(T)-\mathcal{T}(T)^m\|\leq 2\|\tilde{\nu}-\mu^{\otimes m}\|_1. 
\end{equation}
\end{lem}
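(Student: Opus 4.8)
The plan is to expand the difference $\mathcal{T}_{\tilde{\nu},m}(T)-\mathcal{T}(T)^m$ directly from the definitions and exploit the fact that the regular twirl to the power $m$ coincides with the $\mu^{\otimes m}$-twirl (the observation made right after the definition). First I would write
\begin{equation*}
\mathcal{T}(T)^m=\mathcal{T}_{\mu^{\otimes m},m}(T)=\sum_{i_1,\ldots,i_m}\mu^{\otimes m}(g_{i_1},\ldots,g_{i_m})\,\Phi_{i_1,\ldots,i_m},
\qquad
\Phi_{i_1,\ldots,i_m}:=\bigcirc_{k=1}^m\calU_{g_{i_k}}\circ T\circ\calU_{g_{i_k}}^*,
\end{equation*}
so that
\begin{equation*}
\mathcal{T}_{\tilde{\nu},m}(T)-\mathcal{T}(T)^m=\sum_{i_1,\ldots,i_m}\bigl(\tilde{\nu}(g_{i_1},\ldots,g_{i_m})-\mu^{\otimes m}(g_{i_1},\ldots,g_{i_m})\bigr)\Phi_{i_1,\ldots,i_m}.
\end{equation*}
Then I would apply the triangle inequality for the norm $\|\cdot\|$ and use the hypothesis $\|\Phi_{i_1,\ldots,i_m}\|\le 1$: each $\Phi_{i_1,\ldots,i_m}$ is a composition of quantum channels (the unitary conjugations $\calU_{g}$ are channels, $T$ is a channel, and $\calU_g^*=\calU_{g^{-1}}$ is a channel), hence is itself a quantum channel, so $\|\Phi_{i_1,\ldots,i_m}\|\le1$ by assumption. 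This yields
\begin{equation*}
\bigl\|\mathcal{T}_{\tilde{\nu},m}(T)-\mathcal{T}(T)^m\bigr\|\le\sum_{i_1,\ldots,i_m}\bigl|\tilde{\nu}(g_{i_1},\ldots,g_{i_m})-\mu^{\otimes m}(g_{i_1},\ldots,g_{i_m})\bigr|.
\end{equation*}

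The final step is to recognize the right-hand side as $2\|\tilde{\nu}-\mu^{\otimes m}\|_1$, since the total variation distance defined in \cref{eqn:TVdistance} carries the factor $\tfrac12$ in front of the $\ell^1$ sum; this gives exactly the claimed bound $2\|\tilde{\nu}-\mu^{\otimes m}\|_1$.

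I do not expect a genuine obstacle here — the argument is essentially the triangle inequality plus bookkeeping. The only point requiring a small amount of care is the justification that $\calU_g^*$ is again a quantum channel (so that $\Phi_{i_1,\ldots,i_m}$ is a channel and the norm hypothesis applies): for a unitary channel $\calU_g(\cdot)=U_g\cdot U_g^\dagger$ the Hilbert–Schmidt adjoint is $\calU_g^*(\cdot)=U_g^\dagger\cdot U_g=\calU_{g^{-1}}(\cdot)$, which is manifestly trace preserving and completely positive. One should also note that the estimate is completely insensitive to the choice of norm beyond the contractivity property $\|T\|\le1$ on channels, which is why the statement is phrased for an arbitrary such norm; in applications one takes $\|\cdot\|=\|\cdot\|_{1\to1}$.
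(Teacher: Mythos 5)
Your proposal is correct and follows essentially the same route as the paper: expand $\mathcal{T}(T)^m$ as the $\mu^{\otimes m}$-twirl, subtract, apply the triangle inequality together with $\|\Phi_{i_1,\ldots,i_m}\|\le 1$, and absorb the factor $2$ from the definition of the total variation distance. The extra remark that $\calU_g^*=\calU_{g^{-1}}$ is a channel is a sensible piece of bookkeeping the paper leaves implicit.
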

\begin{proof}
Observe that we may write 
\begin{align*}
\|\mathcal{T}_{\tilde{\nu},m}(T)-\mathcal{T}(T)^m\|=\left\|\sum\limits_{i_1,\ldots,i_m=1}^{|G|}\lb\tilde{\nu}\lb g_{i_1},\ldots,g_{i_m}\rb-\frac{1}{|G|^m}\rb\bigcirc_{k=1}^m\calU_{g_{i_{k}}}\circ T\circ\calU_{g_{i_{k}}}^*\right\|. 
\end{align*}
The claim then follows from \cref{eqn:TVdistance}, as
\begin{align*}
\sum\limits_{i_1,\ldots,i_m=1}^{|G|}\left|\tilde{\nu}\lb g_{i_1},\ldots,g_{i_m}\rb-\frac{1}{|G|^m}\right|=2\|\tilde{\nu}-\mu^{\otimes m}\|_1, 
\end{align*}
the triangle inequality and the fact that $\|\bigcirc_{k=1}^m\calU_{g_{i_{k}}}\circ T\circ\calU_{g_{i_{k}}}^*\|\leq1$.
\end{proof}

Thus, in order to bound $\|\mathcal{T}_{\tilde{\nu},m}(T)-\mathcal{T}(T)^m\|$ 
in any norm in which quantum channels are contractions, it suffices to bound $\|\tilde{\nu}-\mu^{\otimes m}\|_1$.
Examples of such norms are the $1\to1$ norm and the diamond norm~\cite[Theorem 2.1]{PerezGarcia_2006}.
We remark that other notions of approximate twirling were considered in the literature~\cite{Dankert_2009,harrow2009random}, 
but these works were mostly concerned with the case of the unitary group and not arbitrary finite groups. 
Although it  would be straightforward to adapt their definitions to arbitrary finite groups, it is not clear 
at first sight that their notions of approximate twirls behave well when taking powers of channels that have been twirled approximately.
This is key for RB.
Given random unitaries $\{U_i\}_{i=1}^m$ from $G$, let $\cD_k=\bigcirc_{i=1}^k\cU_i$, as before. 

\begin{restatable}{thm}{djsameasproduct}
\label{thm:djsameasproduct}
Let $\mu$ be the Haar measure on $G$ and $\nu_1,\ldots,\nu_m$ probability measures on $G$ s.t.
\begin{equation}\label{equ:measurehaarclose1}
\|\mu-\nu_k\|_1\leq \epsilon_k, 
\end{equation}
for all $1\leq k\leq m$ and $\epsilon_k\geq0$.
Denote by $\tilde{\nu}$ the distribution of $(\cD_1,\ldots,\cD_m)$ if we pick the $\cU_k$ independently from $\nu_k$ .
Then 
\begin{equation}\label{eq:approxsamplesbound}
\|\mathcal{T}_{\tilde{\nu},m}(T)-\mathcal{T}(T)^m\|_{1\to1}\leq 4\sqrt{\frac{\log(|G|)}{1-|G|^{-1}}\sum_{k=1}^m\epsilon_k}. 
\end{equation}
\end{restatable}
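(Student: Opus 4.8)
The plan is to peel off the channel-theoretic part immediately and reduce the whole statement to a total-variation estimate for probability measures on $G^m$. Since quantum channels are contractions in the $\|\cdot\|_{1\to1}$ norm~\cite[Theorem 2.1]{PerezGarcia_2006}, \cref{lem:approxtwirl} applies with $\|\cdot\|=\|\cdot\|_{1\to1}$ and gives
\begin{equation*}
\|\mathcal{T}_{\tilde{\nu},m}(T)-\mathcal{T}(T)^m\|_{1\to1}\leq 2\|\tilde{\nu}-\mu^{\otimes m}\|_1 .
\end{equation*}
So it suffices to prove $\|\tilde{\nu}-\mu^{\otimes m}\|_1\leq 2\sqrt{\tfrac{\log(|G|)}{1-|G|^{-1}}\sum_{k}\epsilon_k}$, a statement that no longer mentions $T$.

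The first step is to strip the convolution structure out of $\tilde\nu$. The map $\Phi:G^m\to G^m$, $(g_1,\ldots,g_m)\mapsto(g_1,\,g_2g_1,\,\ldots,\,g_m\cdots g_1)$ --- i.e. the map recording the group elements carried by $\cD_1,\ldots,\cD_m$ --- is a bijection, with inverse $(h_1,\ldots,h_m)\mapsto(h_1,h_2h_1^{-1},\ldots,h_mh_{m-1}^{-1})$. By construction $\Phi$ pushes $\nu_1\otimes\cdots\otimes\nu_m$ forward to $\tilde\nu$, and, because the Haar measure on a finite group is left- and right-invariant, an easy induction shows that $\Phi$ pushes $\mu^{\otimes m}$ forward to $\mu^{\otimes m}$. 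Total variation distance and relative entropy are invariant under relabeling by a bijection, so $\|\tilde\nu-\mu^{\otimes m}\|_1=\|\nu_1\otimes\cdots\otimes\nu_m-\mu^{\otimes m}\|_1$ and, by additivity of the relative entropy on product measures (as recalled in \cref{sec:markov}), $D(\tilde\nu||\mu^{\otimes m})=\sum_{k=1}^m D(\nu_k||\mu)$.

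The heart of the argument is a reverse Pinsker bound adapted to the uniform measure: for every probability measure $\nu$ on $G$,
\begin{equation*}
D(\nu||\mu)\leq\frac{\log|G|}{1-|G|^{-1}}\,\|\nu-\mu\|_1 .
\end{equation*}
I would prove this by writing $D(\nu||\mu)=|G|^{-1}\sum_{g\in G}h(|G|\nu(g))$ with the convex function $h(x)=x\log x$, noting $h(1)=0$, and bounding $h$ from above by chords: $h(x)\leq 0$ on $[0,1]$ (chord through $(0,0)$ and $(1,0)$) and $h(x)\leq\tfrac{|G|\log|G|}{|G|-1}(x-1)$ on $[1,|G|]$ (chord through $(1,0)$ and $(|G|,|G|\log|G|)$); summing over $g$, dropping the nonpositive terms, and using $\sum_{g}(|G|\nu(g)-1)_{+}=|G|\,\|\nu-\mu\|_1$ (which holds since $\sum_{g}(|G|\nu(g)-1)=0$) yields the claim. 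Feeding in the hypotheses $\|\nu_k-\mu\|_1\leq\epsilon_k$ gives $D(\tilde\nu||\mu^{\otimes m})\leq\frac{\log|G|}{1-|G|^{-1}}\sum_k\epsilon_k$, and Pinsker's inequality then gives
\begin{equation*}
\|\tilde\nu-\mu^{\otimes m}\|_1\leq\sqrt{\tfrac12 D(\tilde\nu||\mu^{\otimes m})}\leq\sqrt{\frac{\log|G|}{2\,(1-|G|^{-1})}\sum_{k=1}^m\epsilon_k},
\end{equation*}
which, plugged into the first display, proves the theorem (in fact with $\sqrt2$ in place of $4$, so there is room to spare).

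I expect the only genuinely non-routine step to be the reverse Pinsker inequality; everything else is bookkeeping with the bijection $\Phi$ and standard facts about relative entropy, Pinsker, and total variation. The one place to be careful is the accounting of the factors of $2$ coming from three different normalization conventions --- the paper's $\|\cdot\|_1$ on measures being half the $\ell^1$-distance, the $\tfrac12$ in Pinsker's inequality, and the factor $2$ produced by \cref{lem:approxtwirl} --- which is precisely why stating the clean but slightly loose constant $4$ is convenient.
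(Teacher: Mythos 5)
Your proposal is correct and follows the paper's proof essentially step for step: reduce to a total-variation bound via \cref{lem:approxtwirl}, remove the convolution structure with the bijection $(g_1,\ldots,g_m)\mapsto(g_1,g_2g_1,\ldots)$, tensorize the relative entropy, and combine Pinsker with a reverse Pinsker inequality for the uniform measure. The only departures are cosmetic: you prove the reverse Pinsker bound by an elementary chord argument where the paper cites~\cite[Theorem 1]{Sason_2015}, and you track the constants more tightly (obtaining $\sqrt{2}$ where the paper settles for $4$), both of which are fine.
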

\begin{proof}
We refer to~\cref{sec:proofapproxsamples} for a proof and only sketch the main steps here. 
We start by applying~\cref{lem:approxtwirl} to reduce the problem of estimating this norm to estimating the total variation distance between $\tilde{\nu}$ and $\mu$.
We then show that the total variation distance between $\tilde{\nu}$ and $\mu$ and $\otimes_{k=1}^m\mu_k$ coincide. We bound this total variation distance by the relative entropy
using Pinsker's inequality, explore tensorization properties of the relative entropy, and then use a reverse Pinsker inequality. We then obtain the final bound from~\cref{eq:approxsamplesbound}.
\end{proof}

Note that the same result holds for any norm that contracts under quantum channels, 
such as the diamond 
norm.

\begin{cor}\label{cor:POVMapprox}
Let $\mu$ be the Haar measure on $G$ and $\nu_1,\ldots,\nu_m$ probability measures on $G$ s.t.
\begin{equation}\label{equ:measurehaarclose}
\|\mu-\nu_k\|_1\leq \epsilon_k, 
\end{equation}
for all $1\leq k\leq m$ and $\epsilon_k\geq0$.
Denote by $\tilde{\nu}$ the distribution of $(\cD_1,\ldots,\cD_m)$ if we pick the $\cU_k$ independently from $\nu_k$.
Then 
\begin{equation}
|\tr{\mathcal{T}_{\tilde{\nu},m}(T)(\rho)E}-F(m,E,\rho)|\leq 4\sqrt{\frac{\log(|G|)}{1-|G|^{-1}}\sum_{k=1}^m\epsilon_k}. 
\end{equation}
\end{cor}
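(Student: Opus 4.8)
The plan is to deduce this directly from \cref{thm:djsameasproduct}, using \cref{thm:expecttwirl} to identify $F(m,E,\rho)$ with the twirled expression and an elementary Schatten-norm estimate to pass from the scalar quantity to a $1\to1$ norm. First I would rewrite the target: by Step 5 of the protocol and \cref{thm:expecttwirl} we have $F(m,E,\rho)=\tr{\mathbb{E}(\cS_m)(\rho)E}=\tr{\mathcal{T}(T)^m(\rho)E}$, so that
\begin{align*}
\abs{\tr{\mathcal{T}_{\tilde{\nu},m}(T)(\rho)E}-F(m,E,\rho)}=\abs{\tr{\big(\mathcal{T}_{\tilde{\nu},m}(T)-\mathcal{T}(T)^m\big)(\rho)E}}.
\end{align*}

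Next I would bound this scalar by a norm of the channel difference. Set $\Phi:=\mathcal{T}_{\tilde{\nu},m}(T)-\mathcal{T}(T)^m$. By Hölder's inequality for Schatten norms, $\abs{\tr{\Phi(\rho)E}}\leq\norm{\Phi(\rho)}_1\,\norm{E}_\infty$; since $E$ is an effect operator we have $0\leq E\leq\bbI_d$, hence $\norm{E}_\infty\leq1$, while $\norm{\rho}_1=\tr{\rho}=1$ because $\rho$ is a state. Therefore
\begin{align*}
\abs{\tr{\Phi(\rho)E}}\leq\norm{\Phi(\rho)}_1\leq\norm{\Phi}_{1\to1}\norm{\rho}_1=\norm{\Phi}_{1\to1}.
\end{align*}

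Finally, the measure $\tilde{\nu}$ is by hypothesis precisely the joint distribution of $(\cD_1,\ldots,\cD_m)$ obtained by drawing the $\cU_k$ independently from the $\nu_k$ with $\norm{\mu-\nu_k}_1\leq\epsilon_k$, so \cref{thm:djsameasproduct} applies without any modification and yields $\norm{\Phi}_{1\to1}=\norm{\mathcal{T}_{\tilde{\nu},m}(T)-\mathcal{T}(T)^m}_{1\to1}\leq 4\sqrt{\tfrac{\log(|G|)}{1-|G|^{-1}}\sum_{k=1}^m\epsilon_k}$; combining the three displays gives the claim. I do not expect a genuine obstacle here, as the entire quantitative content sits in \cref{thm:djsameasproduct}; the only points needing a little care are invoking \cref{thm:expecttwirl} to replace $F(m,E,\rho)$ by $\tr{\mathcal{T}(T)^m(\rho)E}$ and using the POVM bound $\norm{E}_\infty\leq1$ rather than implicitly treating $E$ as a normalized state.
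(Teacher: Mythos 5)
Your proposal is correct and follows essentially the same route as the paper's own proof: identify $F(m,E,\rho)$ with $\tr{\mathcal{T}(T)^m(\rho)E}$, apply H\"older's inequality together with $\|E\|_\infty\leq1$ and the $1\to1$ operator norm, and conclude via \cref{thm:djsameasproduct}. Your version is slightly more explicit about invoking \cref{thm:expecttwirl} and about the intermediate bound $\|\Phi(\rho)\|_1\leq\|\Phi\|_{1\to1}\|\rho\|_1$, but the substance is identical.
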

\begin{proof}
It follows from H\"older's  inequality  that 
\begin{align*}
|\tr{\mathcal{T}_{\tilde{\nu},m}(T)(\rho)E}-F(m,E,\rho)| 
=&|\tr{ E\lb\mathcal{T}_{\tilde{\nu},m}(T)(\rho)-\mathcal{T}\lb T\rb^m(\rho)\rb}|\\
\leq&\|E\|_\infty\|\mathcal{T}_{\tilde{\nu},m}(T)-\mathcal{T}(T)^m\|_{1\to1},
\end{align*}
where we have used the submultiplicativity of the $1\to 1$-norm.
As $E$ is the element of a POVM, we have $\|E\|_\infty\leq1$ and the claim then follows from \cref{thm:djsameasproduct}. 
\end{proof}

This shows that we may use approximate twirls instead of exact ones and obtain expectation values that are close to the perfect twirl.
Given that we want to assure that the statistics we obtain for some $m\in\bbN$ are $\delta>0$ close to our target distribution,
we would have to sample the $U_{g_k}$ such that
\begin{equation}
\|\mu-\nu_k\|_1\leq\frac{\delta^2(1-|G|^{-1})}{16\log(|G|)m}, 
\end{equation}
as can be seen by plugging in this bound in the result of \cref{cor:POVMapprox}.
If we use a random walk on a group to sample from the Haar distribution
we have to run each chain for $t_1\lb\frac{\delta^2(1-|G|^{-1})}{16\log(|G|)m}\rb$ steps, which gives
a total runtime of $\Or\lb t_{\text{mix}}\log\lb\frac{16\log(|G|)m}{\delta^2(1-|G|^{-1})}\rb\rb$.
For a fixed $\delta$, this will be efficient if the chain mixes rapidly, that is, $t_{\text{mix}}$ is small, and
we choose $m$ to be at most of the order of the dimension.

\section{Randomized benchmarking with generators}\label{sec:Generators}
One of the downsides of the usual RB protocol~\cite{Knill_2008, Emerson_2007, Levi_2007, Emerson_2005, Wallman_2017, Proctor_2017, Gambetta_2012, Magesan_2012, Magesan_2011, Dankert_2009}
is that we assume that we may implement any gate of the group. Usually, gates have to be broken down into generators, 
as discussed in~\cite[Section 1.2.3 and Chapter 8]{kliuchnikov2014new}.
Therefore, it would be desirable both from the point of view of justifying the noise model and the implementation
level of the protocol to mostly need to implement gates from a set of generators.
We describe here a protocol to perform RB by just implementing gates from a set of generators closed
under inversion and one
arbitrary gate. We also make the additional assumption that the quantum channel that describes the noise is already approximately covariant in a sense
we will make precise soon.
This protocol is inspired by results of the last section that suggest a way of performing RB by
just implementing gates 
coming from a set $A$ that generates the group 
$G$ and is closed under inversion and one additional arbitrary gate from $G$ at each round of the protocol.
From the basic results of random walks discussed in \cref{sec:markov}, we know that if we pick gates $U_{g_1},U_{g_2},\ldots$
uniformly at random from $A$, it follows that $U_{g_b}U_{g_{b-1}}\ldots U_{g_1}$ will be approximately distributed 
like the Haar measure on $G$ for $b\simeq t_{\text{mix}}$.
However, one should note that in this setting the $\cD_i$, defined in~\cref{eq:DefD}, will not be independent of each other. To see this, note that given $\cD_i=\cU_{g}$,
we know that the distribution of the $D_{i+1}$ is restricted to elements $h\in G$ of the form $h=ag$ with $a\in A$, which clearly
show that they are not independent in general. 
However,  if we look at $\cD_{i+l}$ for $l\sim t_{\text{mix}}$, then their joint distribution
will be close to Haar.
That is, looking at $\cD_i$ and $\cD_j$ which are far enough apart from each other, we may again assume that they are both almost Haar distributed and
if we look at each $\cD_i$ individually we may assume that they are almost Haar distributed.
One way to explore this observation for RB protocols only having to implement the generators is
to look at the following class of quantum channels:
\begin{defn}[$\delta$-covariant quantum channel]\label{def:deltacov}
A quantum channel $T:\cM_d\to\cM_d$ is called $\delta$-covariant w.r.t. a unitary representation $U:G\to\cM_d$
of a group $G$, if there exist quantum channels $T_c,T_n:\cM_d\to\cM_d$ such that
\begin{equation}
T=(1-\delta)T_c+\delta T_n, 
\end{equation}
and $T_c$ is covariant w.r.t. $U$.
\end{defn}
That is, $T$ is almost covariant w.r.t. the group. 
Similar notions of approximate covariance were also introduced in~\cite{Leditzky_2018}.
Their notion of an approximate covariant channel is arguably more natural than ours, as they quantify how close a channel is to being covariant using
the minimal distance to a covariant channel in the diamond norm. 
Unfortunately, we also need information on how close the powers of the channel are to being covariant and it is not clear how to derive such bounds
using their definition but it is straightforward using ours.
Another issue related to our definition is the fact that it does not cover quantum channels that are unitary, unless they are the identity. That is because unitary channel are extremal points
of the set of quantum channels~\cite{Mendl_Wolf_2009} and thus cannot be written in any nontrivial convex combination with another quantum channel. Thus, it remains an open problem how to generalize these methods 
to unitary noise, as preliminary numerical evidence suggests that the protocol also gives good estimates in this case.
The standard example of  quantum channels that satisfy our definition are quantum channels that are close to the identity channel, i.e., we have $\delta$ small and $T_c$ the
identity channel.

We will need to fix some notation before we describe the protocol.
For a given sequence of unitaries $s_i=(U_{g_1},U_{g_2},\ldots)$ we let
$\cS_{s_i,c,d} = \bigcirc^{d}_{j=c} \cU_{g_j} \circ T$ for  $c,d\in\bbN$ and the gates chosen according to the sequence. 

Thus, if we apply random generators $b$ times as an initialization procedure and only start fitting the curve after this initialization
procedure we may also estimate the average fidelity.

This yields the following protocol.

\begin{description}
\item[Step 1]
Fix a positive integer $m \in \bbN$ that varies with every loop and another integer $b\in\bbN$.
\item[Step 2]
Generate a sequence of $b+m+1$ quantum gates, $s_i$. 
The first $b+m$ quantum gates $\cU_{g_1}, \ldots, \cU_{g_{b+m}}$ are chosen independently and uniformly at random from $A$.
The final quantum gate, $\cU_{g_{b+m+1}}$ is chosen as
\begin{equation}
\cU_{g_{b+m+1}}= (\cU_{g_{b+m}} \circ \ldots \circ \cU_{g_2} \circ \cU_{g_{1}})^{-1}.
\end{equation}
\item[Step 3]
For each sequence $s_i$, measure the sequence fidelity
\begin{equation}
\tr{ \cS_{s_i,b+1,b+m+1} (\cS_{s_i,1,b}(\rho))  E},
\label{eq:AverageFidelity2}
\end{equation}
where $\rho$ is the initial quantum state and $E$ is an effect operator of a POVM. 
\item[Step 4]
Repeat steps 2-3 and average over $M$ random realizations of the sequence of length $m$ to find the averaged sequence fidelity
\begin{equation}
\bar{F}(m, E, \rho)=\frac{1}{M}\sum\limits_{i=1}^M\tr{  \cS_{s_i,b+1,b+m+1} (\cS_{s_i,1,b}(\rho))  E}.
\end{equation}
\item[Step 5]
Repeat steps 1-4 for different values of $m$ to obtain an estimate of the expected value of the average survival probability
\begin{equation}
F(m,E,\rho) =\mathbb{E}\lb\tr{  \cS_{s_i,b+1,b+m+1} (\cS_{s_i,1,b}(\rho)) E }\rb.
\label{eq:fitmodel}
\end{equation}
\end{description}
We will now prove that this procedure gives rise to the same statistics as if we were using samples from the 
Haar distribution up to $\Or(\delta^2)$.
\begin{restatable}{thm}{onetoonegen}
\label{thm:1to1gen}
Let $T$ be $\delta-$covariant w.r.t. a unitary representation $U:G\to\cM_d$ of a finite group $G$, $A$ a subset of $G$ that generates
$G$ and is closed under inversion and $\delta>0$. Suppose
we run the protocol above with $b=t_{1}(m^{-1}\epsilon)$ for some $\epsilon$ and $m\geq b$. 
Then
\begin{equation}
\|\cT(T)^m-\mathbb{E}(\cS_{b,b+m+1})\|_{1\to1}\leq\epsilon+\Or\lb\delta^2bm\rb. 
\end{equation}
\end{restatable}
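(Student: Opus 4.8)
The plan is to decompose the error into a covariant part and a small remainder, and to track two distinct sources of error separately: (i) the fact that the $\cD_i$ generated by the random walk are only approximately Haar distributed (and only approximately independent), and (ii) the fact that $T$ is only $\delta$-covariant rather than covariant. First I would write $T = (1-\delta)T_c + \delta T_n$ as in \cref{def:deltacov}, with $T_c$ covariant. Expanding $\cS_{s_i,1,b+m+1}$ (the full net channel including the two blocks) as a composition of $m$ factors of the form $\cD_j^\ast \circ T \circ \cD_j$ together with the initial block of $b$ gates, I would multilinearly expand each $T$ into its $T_c$ and $T_n$ pieces. The ``all $T_c$'' term contributes $\bigcirc_{j} (\cD_j^\ast \circ T_c \circ \cD_j)$, and every other term carries at least one factor of $\delta$; there are $\binom{m}{j}\delta^j$-weighted terms, but only the $j=0$ and $j=1$ levels matter at the claimed $\Or(\delta^2 bm)$ precision. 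I would bound the total weight of the $j\geq 2$ terms by $(1-(1-\delta)^m - m\delta(1-\delta)^{m-1}) \leq \binom{m}{2}\delta^2 = \Or(\delta^2 m^2)$, and since each such term is a composition of quantum channels it has $1\to 1$ norm at most $1$; here I should double-check whether the stated bound really wants $\delta^2 bm$ or $\delta^2 m^2$, and invoke $m \geq b$ so that $bm \leq m^2$ — so the cleaner bound $\Or(\delta^2 m^2)$ is what the expansion gives, and one uses $b \leq m$ only if one insists on writing it as $bm$; I would keep $\Or(\delta^2 m^2)$ or note $b\le m \Rightarrow bm\le m^2$. The linear-in-$\delta$ terms need more care: a single $T_n$ sits at some position $j$, flanked by $\cD$'s, and after telescoping the $\cD$'s these become $\cD_j^\ast \circ T_n \circ \cD_j$ conjugations sandwiched between blocks of $\cD_k^\ast \circ T_c \circ \cD_k$; averaging over the random walk, the key point is that $T_c$ is \emph{exactly} covariant so each such block averages (approximately, by the random-walk mixing) to $\cT(T_c)$, and one checks that the $j=1$ contribution's deviation from what the Haar protocol would give is itself $\Or(\delta)\cdot\Or(\text{mixing error})$, hence absorbable.

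Next I would handle the ``all $T_c$'' term, which is where the random-walk analysis enters. Here I want to compare $\mathbb{E}\big(\bigcirc_{j=1}^m (\cD_j^\ast \circ T_c \circ \cD_j)\big)$, with the $\cD_j$ drawn from the generator random walk after a burn-in of $b$ steps, against $\cT(T_c)^m = \cT(T)^m$ — wait, note $\cT(T) = \cT(T_c) \cdot(1-\delta) + \delta \cT(T_n)$, so even the twirls differ by $\Or(\delta)$; I would absorb $\|\cT(T)^m - \cT(T_c)^m\|_{1\to 1} \leq \Or(\delta m)$ into the error budget as well (this is $\Or(\delta m)$, not $\Or(\delta^2)$, so I must be careful — in fact the correct target comparison is presumably with the output of the \emph{exact-Haar} protocol applied to the same $\delta$-covariant $T$, cf.\ \cref{thm:expecttwirl}, which gives $\cT(T)^m$; then the $T_c$-vs-$T$ discrepancy is a red herring because the Haar protocol on $T$ also splits the same way, and the genuinely-new error is only the random-walk mixing error, which I control next). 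So the honest comparison is: generator protocol on $T$ vs.\ Haar protocol on $T$, and after the multilinear split in $\delta$ the $T_c$-part of \emph{both} needs comparing, while the $\Or(\delta)$ and $\Or(\delta^2)$ parts match term-by-term up to their own mixing errors. For the $T_c$ block with exact covariance I would use \cref{lem:approxtwirl} / \cref{thm:djsameasproduct}: the joint distribution $\tilde\nu$ of $(\cD_{b+1},\dots,\cD_{b+m})$ after burn-in is within total variation $\sum_{k}\epsilon_k$ of $\mu^{\otimes m}$ in the relevant sense, where each marginal is within $m^{-1}\epsilon$ of Haar by the choice $b = t_1(m^{-1}\epsilon)$, giving $\sum_k \epsilon_k \leq m \cdot m^{-1}\epsilon = \epsilon$; but the $\cD$'s are \emph{dependent}, so I cannot directly cite \cref{thm:djsameasproduct} which assumed independence — instead I would mimic its proof, bounding $\|\tilde\nu - \mu^{\otimes m}\|_1$ by a telescoping/chain argument using that each $\cD_{b+j}$, conditioned on the past, is $(m^{-1}\epsilon)$-close to Haar (this is exactly what $b \geq t_1(m^{-1}\epsilon)$ buys us: the burn-in resets closeness to Haar at every step since each $\cD_{b+j} = (\text{one more step}) \cdot \cD_{b+j-1}$ and $t_1$ is a uniform-over-starting-distribution mixing time). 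Summing the per-step $m^{-1}\epsilon$ over $j=1,\dots,m$ yields $\epsilon$, and then \cref{lem:approxtwirl} converts this to $\|\cdot\|_{1\to1} \leq 2\epsilon$.

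Assembling the pieces: $\|\cT(T)^m - \mathbb{E}(\cS_{b,b+m+1})\|_{1\to1} \leq \underbrace{(\text{random-walk error on the }T_c\text{-block})}_{\Or(\epsilon)} + \underbrace{(\text{linear-in-}\delta\text{ terms' mixing discrepancy})}_{\Or(\delta\epsilon)\subseteq\Or(\epsilon)} + \underbrace{(\text{quadratic-and-higher }\delta\text{ terms})}_{\Or(\delta^2 m^2)}$, and with $b \leq m$ one may write the last as $\Or(\delta^2 bm)$ if desired, yielding the claimed $\epsilon + \Or(\delta^2 bm)$. The main obstacle I expect is item (i) done correctly: the $\cD_j$ are genuinely \emph{not} independent (as the paragraph before the theorem emphasizes), so \cref{thm:djsameasproduct} does not apply verbatim and I must redo its total-variation estimate for the dependent chain, using the Markov property and the uniform mixing-time bound $t_1$ to show that the burn-in $b$ makes every $\cD_{b+j}$ simultaneously close to Haar; the bookkeeping of which telescoped products of $\cD$'s survive after the $\delta$-expansion, and verifying that the cross terms between ``mixing error'' and ``$\delta$-error'' really are higher order, is the fiddly part. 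A secondary subtlety is making precise the sense in which $\cT(T_c)$-blocks compose — i.e.\ that $\mathbb{E}_{\cD}\big[\cD^\ast \circ T_c \circ \cD \circ (\text{anything covariant})\big] = \cT(T_c) \circ (\text{that thing})$ only when the averaging is over exact Haar and the inner thing is exactly covariant, which is why isolating the exactly-covariant $T_c$ before averaging is essential.
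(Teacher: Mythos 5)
Your overall skeleton --- split $T=(1-\delta)T_c+\delta T_n$, expand multilinearly in $\delta$, and compare order by order with $\cT(T)^m$ --- is the same as the paper's. But two steps go wrong. First, your treatment of the ``all-$T_c$'' block is both unnecessary and unworkable. Unnecessary: since $T_c$ is \emph{exactly} covariant, $\cD_j^\ast\circ T_c\circ\cD_j=T_c$ holds pointwise for every realization of the walk, so the zeroth-order terms of the two expansions are literally equal (namely $(1-\delta)^mT_c^m$ on both sides) and no random-walk analysis is needed there at all. Unworkable: the bound on $\|\tilde\nu-\mu^{\otimes m}\|_1$ you propose via ``each $\cD_{b+j}$ conditioned on the past is $(m^{-1}\epsilon)$-close to Haar'' rests on a false premise --- conditioned on $\cD_{b+j-1}=g$, the law of $\cD_{b+j}$ is uniform on the set $Ag$ of size $|A|$, which is at total variation distance $1-|A|/|G|$ from Haar no matter how long the burn-in. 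Consequently the joint law of $(\cD_{b+1},\ldots,\cD_{b+m})$ is at distance essentially $1$ from $\mu^{\otimes m}$, and no telescoping argument can rescue that route. The point of the $\delta$-covariance assumption is precisely that you never need the full joint law: at precision $\Or(\delta^3)$ only single marginals (for the first-order terms) and pair marginals (for the second-order terms) of the walk enter, and those \emph{are} controlled by the burn-in.

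Second, your bound on the quadratic terms is too weak, and your conversion of it to the stated form goes the wrong way. Discarding all terms with two or more factors of $T_n$ and bounding their total weight gives $\Or(\delta^2m^2)$; since $b\leq m$ we have $bm\leq m^2$, so $\Or(\delta^2m^2)$ does \emph{not} imply $\Or(\delta^2bm)$ --- you would be replacing a bound by a smaller quantity. To obtain $\Or(\delta^2 bm)$ the paper compares the second-order terms of the two expansions rather than discarding them: the difference for a fixed pair of positions is controlled by $\|\tau_{j_1,j_2}-\mu^{\otimes2}\|_1$, where $\tau_{j_1,j_2}$ is the joint law of $(\cD_{j_1},\cD_{j_2})$; by the Markov property and double stochasticity of the transition matrix this is $\Or(\epsilon m^{-1})$ whenever $j_2-j_1\geq t_1(\epsilon m^{-1})=b$, and only the $\Or(bm)$ pairs with gap smaller than $b$ receive the trivial bound $2$. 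Summing gives $\Or(\delta^2 bm)$. Your first-order analysis ($\delta\cdot m\cdot\epsilon m^{-1}=\delta\epsilon\leq\epsilon$) is essentially correct once one uses $\cD_j^\ast\circ T_c\circ\cD_j=T_c$ exactly rather than ``approximately by mixing.''
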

\begin{proof}
We refer to \cref{sec:proofapproxgen} for a proof.
\end{proof}

\begin{cor}\label{cor:POVMapproxgen}
Let $\cS_{b,m+b+1}$ and $b$ be as in \cref{thm:1to1gen}.
Then for any POVM element $E$ and state $\rho\in\cM_d$: 
\begin{equation}
|\tr{\mathbb{E}\lb S_{b,m+1}\rb(\rho)E}-F(m,E,\rho)|\leq \epsilon+\Or\lb\delta^2bm\rb.  
\end{equation}
\end{cor}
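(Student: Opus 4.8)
The plan is to deduce Corollary~\ref{cor:POVMapproxgen} from Theorem~\ref{thm:1to1gen} in essentially the same way that Corollary~\ref{cor:POVMapprox} was deduced from Theorem~\ref{thm:djsameasproduct}. First I would write the quantity to be bounded as the trace of $E$ against the difference of two channels applied to $\rho$, that is
\begin{align*}
|\tr{\mathbb{E}\lb S_{b,m+b+1}\rb(\rho)E}-F(m,E,\rho)|
=|\tr{E\lb\mathbb{E}(\cS_{b,b+m+1})(\rho)-\cT(T)^m(\rho)\rb}|,
\end{align*}
using the definition of $F(m,E,\rho)$ from \cref{eq:fitmodel} (keeping in mind the minor notational mismatch between $\cS_{b,m+1}$, $\cS_{b,m+b+1}$ and $\cS_{b,b+m+1}$ in the statements, all of which refer to the same net sequence channel). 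Then I would apply H\"older's inequality to peel off $E$, bounding the expression by $\|E\|_\infty\,\|\mathbb{E}(\cS_{b,b+m+1})(\rho)-\cT(T)^m(\rho)\|_1$, and then bound $\|\cdot\|_1$ on the state by the $1\to1$ norm of the channel difference applied to $\rho$, using $\|\rho\|_1=1$. Since $E$ is a POVM element we have $\|E\|_\infty\leq 1$, so the whole thing is at most $\|\cT(T)^m-\mathbb{E}(\cS_{b,b+m+1})\|_{1\to1}$.

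The final step is simply to invoke Theorem~\ref{thm:1to1gen}, which bounds exactly this $1\to1$ norm by $\epsilon+\Or(\delta^2 bm)$, giving the claim. So the skeleton is: rewrite as a trace against $E$; H\"older plus submultiplicativity of the $1\to1$ norm; $\|E\|_\infty\leq1$; then apply Theorem~\ref{thm:1to1gen}. This is short enough that I would present it in full rather than as a sketch.

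There is essentially no genuine obstacle here — the corollary is a routine consequence of the theorem, and all the analytic content (the approximate-Haar random-walk mixing argument and the $\Or(\delta^2 bm)$ accumulation of the covariance defect over the sequence) is absorbed into Theorem~\ref{thm:1to1gen}, whose proof is deferred to the appendix. The only thing worth being careful about is consistency of notation: making sure the object called $\cS_{b,m+1}$ in the corollary statement is the same net channel $\cS_{s_i,b+1,b+m+1}\circ\cS_{s_i,1,b}$ appearing in the protocol and in Theorem~\ref{thm:1to1gen}, and that $F(m,E,\rho)$ is read off from \cref{eq:fitmodel}. Given that identification, the argument is a one-line application of H\"older's inequality exactly as in the proof of \cref{cor:POVMapprox}.

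\begin{proof}
By the definition of $F(m,E,\rho)$ in \cref{eq:fitmodel} and H\"older's inequality,
\begin{align*}
|\tr{\mathbb{E}\lb S_{b,m+1}\rb(\rho)E}-F(m,E,\rho)|
&=|\tr{E\lb\mathbb{E}(\cS_{b,b+m+1})(\rho)-\cT(T)^m(\rho)\rb}|\\
&\leq\|E\|_\infty\,\|\mathbb{E}(\cS_{b,b+m+1})(\rho)-\cT(T)^m(\rho)\|_1\\
&\leq\|E\|_\infty\,\|\cT(T)^m-\mathbb{E}(\cS_{b,b+m+1})\|_{1\to1}\,\|\rho\|_1,
\end{align*}
where in the last line we used the definition of the $1\to1$ norm. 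Since $\rho$ is a quantum state, $\|\rho\|_1=1$, and since $E$ is an element of a POVM, $\|E\|_\infty\leq1$. Therefore
\begin{align*}
|\tr{\mathbb{E}\lb S_{b,m+1}\rb(\rho)E}-F(m,E,\rho)|\leq\|\cT(T)^m-\mathbb{E}(\cS_{b,b+m+1})\|_{1\to1},
\end{align*}
and the claim follows from \cref{thm:1to1gen}.
\end{proof}
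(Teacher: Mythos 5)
Your proposal is correct and follows exactly the route the paper takes: the paper's own proof of \cref{cor:POVMapproxgen} simply states that it is ``essentially the same as that of \cref{cor:POVMapprox}'', which is precisely the H\"older-plus-$1\to1$-norm argument you spell out before invoking \cref{thm:1to1gen}. Your remark about the notational mismatch between $\cS_{b,m+1}$ and $\cS_{b,b+m+1}$ is a fair observation about the paper's statement, not a gap in your argument.
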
 
\begin{proof}
The proof is essentially the same as that of \cref{cor:POVMapprox}.
\end{proof}
This shows that performing RB by only implementing the generators is feasible as long as we have a $\delta-$covariant
channel with $\delta$ small and a rapidly mixing set of generators, that is, $\delta^2bm\ll1$. 
Recall that a Markov chain is said to be rapidly mixing if the mixing time scales polylogarithmically with the size of the state space~\cite{randall2006rapidly}.
In our case, the size of the state space is given by the size of the group we are benchmarking. 
Thus, for groups whose size scales polynomially with the dimension of the system or equivalently
exponentially in the number of qubits,
this translates to $b$ scaling like $\mathcal{O}(n^k\log^k(n\epsilon^{-1}m))$, for $n$
the number of qubits and $k$ a natural number. This scaling renders the protocol reliable if $\delta$ is roughly smaller than the inverse of a polynomial on the number of qubits.

\section{Numerics and Examples} \label{sec:Numerics}
Here we show how to apply our methods to groups that might be of special interest and discuss some numerical examples.
Many relevant questions for the practical application of our work are still left open and have two different flavors: the numerical and
statistical side.
From the numerical point of view, it is not clear at first how to fit the data gathered by a RB protocol
to an exponential curve if we have several parameters. We refer to \cref{sec:numericalconsid}  for a discussion of these issues
and some proposals of how to overcome them.
From a statistical point of view, it is not clear how to derive confidence intervals for the parameters and how large we should
choose the different parameters of the protocol, such as $m$ and $M$. We refer to \cref{sec:confidence} for a discussion
of these issues and preliminary results in this direction.
\subsection{Monomial Unitary Matrices}\label{sec:monomial}
We consider how to apply our methods of generalized RB to some subgroups of the monomial unitary matrices $MU(d)$.
\begin{defn}
Let $\{\ket{i}\}_{i=1}^d$ be an orthonormal basis of $\bbC^d$. We define the group of monomial unitary matrices, $MU(d)$ to be given by
$U\in U(d)$ of the form $U=DP$ with $D,P\in U(d)$ and $D$ diagonal w.r.t. $\set{\ket{i}}_{i=1}^d$ and $P$ a permutation matrix.
\end{defn}
Subgroups of this group can be used to describe many-body states in a formalism that is broader than the stabilizer formalism of Paulis
and have other applications to quantum computation (see ~\cite{VandenNest_2011}).
As the group above is not finite and it is unreasonable to assume that we may implement diagonal gates with phases of an arbitrary 
precision, we focus on the following subgroups:
\begin{defn}
We define $MU(d,n)$ to be the 
subgroup of the monomial unitary matrices of dimension $d$ whose nonzero entries consist only
of $n-$th roots of unity.
\end{defn}
Another motivation to consider these subgroups is
that they contain the $T$-gate~\cite{Nielsen_2009},
\begin{equation}
T=\ketbra{0}{0}+e^{\rmi\frac{\pi}{4}}\ketbra{1}{1} 
\end{equation}
in case $n\geq8$. Thus these gates, together with Cliffords, constitute a universal set of quantum gates~\cite{Nielsen_2009}.
Also note that the group considered here contains the group considered in~\cite{Cross_2016}. There they also consider the group generated by diagonal matrices containing
$n-$th roots of unity, CNOTs and Pauli $X$ gates. Although the latter two are permutations, they do not generate the whole group of permutations and the groups do not coincide. 
We now show that we have to estimate two parameters for them.
\begin{restatable}[Structure of channels covariant w.r.t. monomial  unitaries]{lem}{lemmastructcov}
 \label{lem:structcov}
Let $MU(d,n)$ be such that $n\geq3$ and 
$T:\cM_d\rightarrow\cM_d$ a quantum channel. Then the following are equivalent:
\begin{enumerate} \item 
$ T(\rho)=UT\big(U^\dagger\rho U\big) U^\dagger\quad \forall U\in MU(d,n),\rho\in\cS_d$.
\item There are $\alpha,\beta\in\bbR$ so that 
\begin{equation}\label{eq:symmchannel}
T(\cdot)=\tr{\cdot}\frac{\bbI}{d}+\alpha\; \left(\id - \sum_{i=1}^d \ketbra{i}{i} \bra{i} \cdot \ket{i} \right)  +   \beta\; \left( \sum_{i=1}^d \ketbra{i}{i} \bra{i} \cdot \ket{i} - \tr{\cdot}\frac{\bbI}{d}\right).
\end{equation}
\end{enumerate}
Moreover, the terms in the r.h.s. of \cref{eq:symmchannel} are projections of rank $1$, $d^2-d$ and $d-1$, respectively.
\end{restatable}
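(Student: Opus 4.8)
The plan is to invoke \cref{thm:structurecov}: condition (i) says exactly that $T$, regarded as an element of $\End(\cM_d)$, lies in the commutant of the adjoint representation $\cU_g(X)=U_gXU_g^\dagger$ of $MU(d,n)$ (one passes from density matrices to all of $\cM_d$ by linearity), so the whole statement reduces to decomposing $\cM_d$ into $MU(d,n)$-irreducibles under this conjugation action and reading off the commutant via \cref{thm:decompalgebra}. I would work in the matrix-unit basis $E_{ij}=\ketbra{i}{j}$ and split $\cM_d=\cD\oplus\cO$, where $\cD=\spn\{E_{ii}\}_{i=1}^d$ and $\cO=\spn\{E_{ij}:i\neq j\}$. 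Both summands are invariant, since every $U\in MU(d,n)$ has the form $U=DP$ with $P$ a permutation matrix (inducing a permutation $\sigma$) and $D=\mathrm{diag}(\omega^{a_1},\dots,\omega^{a_d})$, $\omega=e^{2\pi\rmi/n}$, whence $U E_{ij}U^\dagger=\omega^{a_{\sigma(i)}-a_{\sigma(j)}}E_{\sigma(i)\sigma(j)}$.

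Next I would decompose the two pieces. On $\cD$ the diagonal matrices act trivially, so $MU(d,n)$ acts through the permutation representation of $S_d$ (all permutation matrices belong to $MU(d,n)$), which splits into the trivial representation $\spn\{\bbI\}$ and the irreducible $(d-1)$-dimensional standard representation on the traceless diagonal matrices. The crucial claim is that $\cO$ is irreducible, and this is where the hypothesis $n\ge 3$ is used: the abelian group of diagonal matrices acts on the line $\bbC E_{ij}$ through the character $a\mapsto\omega^{a_i-a_j}$, and for $n\ge 3$ these $d^2-d$ characters are pairwise distinct---the only coincidence possible is between $(i,j)$ and $(j,i)$, which would require $2\equiv 0\bmod n$. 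Consequently any $MU(d,n)$-invariant subspace of $\cO$ is a sum of the lines $\bbC E_{ij}$, and since $S_d$ acts $2$-transitively on the indices it permutes these lines transitively, forcing the subspace to be $0$ or all of $\cO$. The three irreducibles obtained---trivial, standard, $\cO$---are pairwise inequivalent (immediate from the dimensions $1$, $d-1$, $d^2-d$ when $d\ge 3$, and checked directly for $d\le 2$), so the decomposition of $\cM_d$ is multiplicity-free.

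Now I would conclude. By \cref{thm:structurecov} (equivalently, Schur's lemma applied to the multiplicity-free decomposition), $T$ acts as a scalar on each component, say $T=\lambda_0P_0+\lambda_1P_1+\lambda_2P_2$, where $P_0,P_1,P_2$ are the orthogonal projections onto $\spn\{\bbI\}$, the traceless diagonal matrices and $\cO$ respectively. Writing $\Delta(X)=\sum_i\ketbra{i}{i}\bra{i}X\ket{i}$ for the diagonal-part map, one has $P_0(X)=\tr{X}\bbI/d$, $P_1=\Delta-P_0$ and $P_2=\id-\Delta$, so $T$ is literally the three-term expression of \cref{eq:symmchannel} with $\alpha=\lambda_2$ and $\beta=\lambda_1$, and these three summands are projections of rank $\dim\spn\{\bbI\}=1$, $\dim\cO=d^2-d$ and $\dim(\text{traceless diagonal})=d-1$. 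Trace preservation applied to $\bbI\in\spn\{\bbI\}$ forces $\lambda_0=1$, and since a quantum channel is Hermiticity-preserving, applying $T$ to the Hermitian elements $E_{12}+E_{21}\in\cO$ and $E_{11}-E_{22}$ shows $\lambda_2,\lambda_1\in\bbR$. For the converse implication (ii)$\Rightarrow$(i) it suffices to check that $P_0$ and $\Delta$---hence every linear combination of $P_0,\Delta,\id$, in particular the right-hand side of \cref{eq:symmchannel}---commute with conjugation by every $U=DP\in MU(d,n)$, which is immediate because $\tr{\cdot}$ is conjugation-invariant and $\Delta$ commutes with conjugation by diagonal matrices and by permutation matrices separately.

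The main obstacle is the irreducibility of $\cO$ and, specifically, pinning down that $n\ge 3$ is precisely the threshold at which the diagonal characters separate the matrix units $E_{ij}$: for $n=2$ the pairs $E_{ij}$ and $E_{ji}$ transform identically under diagonal matrices, $\cO$ splits into a symmetric and an antisymmetric part, and an extra parameter appears. Once irreducibility (hence multiplicity-freeness) is in place, the remainder is routine bookkeeping with the three projections and \cref{thm:structurecov}.
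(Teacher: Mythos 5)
Your proposal is correct, and it reaches \cref{eq:symmchannel} by a genuinely different (more representation-theoretic) route than the paper. The paper never identifies the irreducible components explicitly: it works with the Choi--Jamiolkowski state, restricts to the diagonal subgroup, and matches phases coefficient-by-coefficient to force $T(\ketbra{i}{j})=B_{ij}\ketbra{i}{j}$ for $i\neq j$ and $T(\ketbra{i}{i})=\sum_j A_{ij}\ketbra{j}{j}$, after which conjugation by permutations equalizes the coefficients. You instead decompose $\cM_d$ under the adjoint action into $\spn\{\bbI\}$, the traceless diagonal matrices, and the span of the off-diagonal matrix units, prove each piece is irreducible and the decomposition multiplicity-free, and read off the scalar action from \cref{thm:structurecov} and Schur's lemma. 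The computational core is identical in both cases --- the distinctness of the diagonal characters for $n\ge 3$ (your $2\equiv 0\bmod n$ obstruction is exactly the paper's $\phi_i-\phi_j=-(\phi_i-\phi_j)$ degeneracy at $n=2$) and the transitivity of $S_d$ on ordered pairs of distinct indices --- but your packaging buys several things the paper leaves implicit: the ranks in the last sentence of the lemma fall out as dimensions of the irreps, the coefficient $1$ on the first term and the reality of $\alpha,\beta$ are derived from trace- and Hermiticity-preservation rather than asserted, and the failure at $n=2$ is explained structurally (the off-diagonal block splits into a symmetric and an antisymmetric part, introducing an extra parameter). The paper's computation is more elementary and self-contained; yours ties the lemma directly to the general commutant machinery of \cref{sec:twirls} on which the rest of the paper is built.
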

\begin{proof}
We refer to \cref{sec:proofstruct} for a proof.
\end{proof}
This result shows that we only need to estimate two parameters when performing RB with these
subgroups. They are therefore a natural candidate to apply our methods to and we investigate this possibility further.
We begin by analyzing the complexity of multiplying and inverting elements of $MU(d,n)$.
We show this more generally for $MU(d)$, as it clearly gives an upper bound for its subgroups as well.
We may multiply and invert elements of $MU(d)$ in time $\Or(d)$.  
To multiply elements in $MU(d)$ we need to multiply two permutations of $d$ elements, which can be done in time $\Or(d)$, multiply a vector $u\in\bbC^d$ with a permutation matrix,
which can be done in time $\Or(d)$, and multiply $d$ elements of $U(1)$  with each other, which again can be done in time $\Or(d)$. This shows that multiplying elements of this group takes $\Or(d)$ operations.
To invert an element of $MU(d)$ we need to invert a permutation, which again takes $\Or(d)$, 
invert $d$ elements of $U(1)$ and apply a permutation to the resulting vector. This also takes $\Or(d)$ operations. Moreover,
one can generate a random permutation and an element of $U(1)^d$ in time $\Or(d)$, giving $\Or(Mmd)$ complexity for the classical part of the RB procedure.
Although this scaling is not efficient in the number of qubits as in the case of Clifford gates~\cite{Knill_2008}, 
the fact that it is linear in the dimension and not superquadratic as in the general case still allows for our method
to be applied to high dimensions.

To exemplify our methods, we simulate our algorithm for some dimensions and number of sequences $M$. We run the simulations
for $MU(d,8)$, as it is the smallest one that contains the $T$-gate.
We consider the case of a quantum channel $T$ that depolarizes to a random state $\sigma\in\cD_d$ with probability $(1-p)$, that is
\begin{equation}\label{equ:defdep}
T(\rho)=p\rho+(1-p)\sigma, 
\end{equation}
where $\sigma\in\cD_d$ is chosen uniformly at random from the set of states. Although the state $\sigma$ is chosen at random each time we run the protocol,
note that it is also fixed for each run. This implies that this quantum channel will in general not be covariant, as $\sigma\not=\bbI /d$ almost surely. 
It is not difficult to see that for this class of channels the entanglement fidelity is $F_e(T)=(p(d^2-1)+1)/d^2$ and we, therefore, measure
our error in terms of the parameter $p$. The results are summarized in \cref{tab:ourresultspermu}.
\begin{table}[htb]
\caption{\label{tab:ourresultspermu}Error analysis of the RB protocol described in \cref{sec:RBprotocol} to the group $MU(d,8)$ and depolarizing noise as defined in~\cref{equ:defdep}. 
We take the initial state to be $\ketbra{0}{0}$, the POVM element to be $\ketbra{0}{0}$, $p=0.9$ and we always choose $m=40$.
Moreover, we generate $100$ different channels for each combination of dimension and number of sequences. The table shows
the resulting mean and median error as well as the standard deviation for different values of $d$ and $M$. 
Here, we define the error to be given by $|F-\hat{F}|$, where $F$ is the true average fidelity of the channel and $\hat{F}$ the estimate we obtain from our protocol.
These results indicate that the protocol performs well with this range of parameters for several different dimensions, as we observed small errors for all combinations
of dimension and $M$.
Note that increasing the number of random sequences $M$ by one order of magnitude reduced the error, although this certainly requires
more experimental effort. }
\begin{indented}
\lineup
\item[]\begin{tabular}{@{}lllll}
\br
$d$ & $M$ & Mean Error & Median Error & Standard Deviation \\
~ & ~ & ($\times 10^{-3}$) &  ($\times 10^{-3}$) & ($\times 10^{-3}$) \\
\mr
\0\064&1000 & 9.17& 2.14 & 3.93\\
\0128&\0100 & 6.08& 1.48 & 2.14\\
\0128&1000 & 5.17& 1.01 & 1.13\\
1024&\0100 & 9.17& 2.14 & 3.93\\
1024&1000 & 4.55& 1.13 & 1.77\\
\br
\end{tabular}
\end{indented}
\end{table}

We also obtain numerical results for unitary noise models. Here we consider quantum channels that are given by a conjugation with a unitary $U$ of the form
\begin{align*}
U=\otimes_{j=1}^ne^{i\theta_j \sigma_{X,j}} 
\end{align*}
for systems of $n$ qubits, $\sigma_{X,j}$ the Pauli $X$ matrix acting on the $j-$th qubit and $\theta_j\in[0,2\pi)$.
We sampled channels of this form by picking the $\theta_j$ independently and uniformly at random from some interval $(0,a)$.
The magnitude of $a$ is a proxy for ``how noisy'' this unitary will be on average. Moreover, we use the methods described
in \cref{sec:isolatingparameters} to isolate the relevant parameters. The results are summarized in~\cref{tab:ourresultspermu2}.
\begin{table}[htb]
\caption{\label{tab:ourresultspermu2}Error analysis of the RB protocol described in \cref{sec:RBprotocol} to the group $MU(d,8)$ and unitary noise. 
We generate $100$ different channels for each value of $a$ and always perform the protocol for $10$ qubits.
For each run of the protocol we generate $1000$ sequences of gates and choose $m=20$.
The table shows
the resulting mean and median error as well as the standard deviation for different values of $a$. 
Here, we define the error to be given by $|F-\hat{F}|$, where $F$ is the true average fidelity of the channel and $\hat{F}$ the estimate we obtain from our protocol.
These results indicate that the protocol performs well with this range of parameters and unitary noise.
}
\begin{indented}
\lineup
\item[]\begin{tabular}{@{}lllll}
\br
$a$ & Mean Error & Median Error & Standard Deviation \\
~ & ($\times 10^{-4}$) &  ($\times 10^{-4}$) & ($\times 10^{-4}$) \\
\mr
0.1 & 3.90& 2.00 & 0.45\\
0.2&  2.63& 1.80 & 2.10\\
0.3&  3.19& 1.9 & 3.05\\
0.4&  4.11& 2.05 & 4.04\\
0.5& 4.71& 2.11 & 4.01\\
\br
\end{tabular}
\end{indented}
\end{table}
These numerical results of \cref{tab:ourresultspermu,tab:ourresultspermu2} clearly show that we may estimate the fidelity to a good degree with our procedure.
\subsection{Clifford Group}\label{sec:cliffordgroup}
As mentioned before, the Clifford group is the usual setup of RB, as we only
have to estimate one parameter and it is one of the main building blocks of quantum computing~\cite{Nielsen_2009}.
Thus, we apply our protocols based on approximate samples of the Haar distribution and generator based 
protocols to Clifford gates.
It is known that the Clifford group on $n$ qubits, $\mathcal{C}(n)$, is generated by the Hadamard gate $H$,
the $\pi-$gate and the $CNOT$ gate between different qubits, defined as
\begin{equation}
H = \frac{1}{\sqrt{2}}
\begin{pmatrix}
	1 & 1 \\ 1 & -1
\end{pmatrix}, \qquad
\pi = \begin{pmatrix}
	1 & 0 \\ 0 & i
\end{pmatrix}\qquad \text{ and } \qquad
CNOT = 
\begin{pmatrix}
	1 & 0 & 0 & 0 \\
	0 & 1 & 0 & 0 \\
	0 & 0 & 0 & 1 \\
	0 & 0 & 1 & 0
\end{pmatrix},
\label{eq:Gates}
\end{equation}
respectively. We refer to e.g.~\cite[Section 5.8]{gottesman1997stabilizer} for a proof
of this claim.
We need a set of generators that is closed under taking inverses for our purposes.
All but the $\pi-$gate are their own inverse, so we add the inverse of the $\pi-$gate to our set of generators to assure that
the random walk converges to the Haar measure on the Clifford group.
That is, we will consider the set $A$ of generators of the Clifford group $\mathcal{C}(n)$
consisting of Hadamard gates, $\pi-$gates and its inverse on each individual qubit and $CNOT$ between any two qubits,
\begin{align}
A=\{\pi_i,\pi^{-1}_i,H_i,CNOT_{i,j}\}.
\end{align}

To the best of our knowledge, there is no rigorous estimate available for the mixing time of the random walk
generated by $A$
and it would certainly be interesting to investigate this question further. 
However, based on our numerical results and the results of~\cite{harrow2009random}, we conjecture that it is
rapidly mixing, i.e. $t_{\text{mix}}=\Or(n^2\log(n))$.
This would be more efficient than the algorithm proposed in~\cite{koenig2014efficiently}, which takes
$\Or(n^3)$ operations.
To again test our methods we perform similar numerics as in the case of the monomial unitaries.

We simulate the following noise model:
We first pick a random isometry $V:\lb\bbC^{2}\rb^{\otimes n}\to\lb\bbC^{2}\rb^{\otimes n}\otimes\lb\bbC^{2}\rb^{\otimes n}$ and 
generate the quantum
channel 
\begin{align}\label{equ:randomchannnelmodel}
T(\rho)=p\rho+(1-p)\textrm{tr}_2\lb V\rho V^\dagger\rb, 
\end{align}
where $\textrm{tr}_2$ denotes the partial trace
over the second tensor factor.
That is, $T$ is just the convex combination of the identity and a random channel and is $\delta$-covariant w.r.t. a group
with $\delta=p$. This sampling procedure ensures that the channel $T$ will not have any further symmetries.
From the discussion in \cref{sec:Generators} we expect this to work best for $p$ close to $1$.
The results for $p$ close to $1$ are summarized in \cref{tab:ourresultsgen}.
\begin{table}[htb]
\caption{\label{tab:ourresultsgen}For each combination of $p,M$ and $b$ we generate $20$ different random quantum channels and perform
    generator RB for the Clifford group on $5$ qubits. In all these cases we pick $m=20$.
    The average error
    is defined as the average of the absolute value between the exact fidelity and the one estimated using our protocol.
    The table shows the average error and its standard deviation in terms of different choices of $b$, $M$ and $p$.}
		\begin{indented}
		\lineup
       \item[]\begin{tabular}{@{}lllll}
        \br
        $p$ & $b$& $M$ & Average Error & Standard Deviation of Error \\
				~ & ~& ~ & ($\times 10^{-3})$ & ($\times 10^{-4})$ \\
        \mr
        0.98 & 10 & \010 & 5.49 & 1.38 \\
        0.95 & 10 & 100 & 1.44 & 3.92 \\
        0.95 & \05 & 100 & 1.52 & 7.94 \\
        0.95 & \05 & \020 & 1.56 & 7.44 \\
        0.90 & 10 & \020 & 3.20 & 1.58 \\
        0.80 & 10 & \050 & 8.63 & 6.01 \\
        \br
    \end{tabular}
    \end{indented}
\end{table}
The average error increases as the channel becomes noisier, but generally speaking we are able to obtain an
estimate which is $10^{-3}$ close to the true value with $M$ around $20$ and $m=20$.

We also performed some numerical experiments for $p$ significantly away from $1$, which are summarized in \cref{tab:ourresultsgenbad}.
\begin{table}[htb]
\caption{\label{tab:ourresultsgenbad}For each combination of $p,M$ and $b$ we generate $20$ different random quantum channels and perform
    generator RB for the Clifford group on $5$ qubits. In all these cases we pick $m=20$.
    The average error
    is defined as the average of the absolute value between the exact fidelity and the one estimated using our protocol.
     The table shows the average error and its standard deviation in terms of different choices of $b$, $M$ and $p$.}
		\begin{indented}
		\lineup
    \item[]\begin{tabular}{@{}lllll}
        \br
        $p$ & $b$ & $M$ & Average Error & Standard Deviation of Error \\
				~ & ~ & ~ & ($\times 10^{-2})$ & ($\times 10^{-3})$ \\
        \mr
        0.7 & 5 & 100 & \02.07 & \01.15 \\
        0.65 & 5 & 100 & \02.29 & \01.95 \\
        0.60 & 5 & 100 & 27.1 & 52.30 \\
        0.55 & 5 & 100 & 44.5 & 67.30 \\
        \br
    \end{tabular}
    \end{indented}     
\end{table}

The noise model above favors quantum channels with a high Kraus rank. Here we also consider the case of quantum channels of the form
\begin{align*}
T(\rho)=p\rho+(1-p)U\rho U^{\dagger}, 
\end{align*}
where $U$ is a randomly chosen (Haar) unitary. These channels have Kraus rank $2$ and are $\delta$-covariant with $\delta=p$. The numerical results can be found in~\cref{tab:ourresultsgenunit}.
\begin{table}[htb]
\caption{\label{tab:ourresultsgenunit}For each combination of $p,M$ and $b$ we generate $20$ different convex combinations of the identity and a random unitary and perform
    generator RB for the Clifford group on $5$ qubits. In all these cases we pick $m=20$.
    The average error
    is defined as the average of the absolute value between the exact fidelity and the one estimated using our protocol.
    The table shows the average error and its standard deviation in terms of different choices of $b$, $M$ and $p$.}
		\begin{indented}
		\lineup
       \item[]\begin{tabular}{@{}lllll}
        \br
        $p$ & $b$& $M$ & Average Error & Standard Deviation of Error \\
				~ & ~& ~ & ($\times 10^{-3})$ & ($\times 10^{-4})$ \\
        \mr
        0.98 & 10 & 100 & 2.30 & 9.44 \\
        0.95 & 10 & 100 & 1.15 & 9.19 \\
	0.90 & 10 & 100 & 3.62 & 2.22 \\
        0.85 & 10 & 100 & 6.67 & 39.4 \\
        0.80 & 10 & 100 & 83.4 & 55.9 \\
        \br
    \end{tabular}
    \end{indented}
\end{table}

These results show that these methods are effective to estimate the average fidelity under less restrictive assumptions
on the gates we may implement
using RB if we have a high fidelity, as indicated in \cref{tab:ourresultsgen,tab:ourresultsgenunit}.
However, in case we do not have a high fidelity, these methods are not reliable, as can be seen in \cref{tab:ourresultsgenbad}.
Note that our numerical results seem to indicate that 
the cut-off of the range of average fidelities we can reliably detect occurs at larger values of the fidelity in the case of channels with lower Kraus rank,
as can be see in \cref{tab:ourresultsgenunit}.
This should not severely restrict the applicability of these methods, as one is usually interested in the high fidelity regime
when performing RB.

\begin{figure}
(a)
\begin{center}
\includegraphics[trim={1.5cm 7cm 0.5cm 7cm},clip,width=0.85\textwidth]{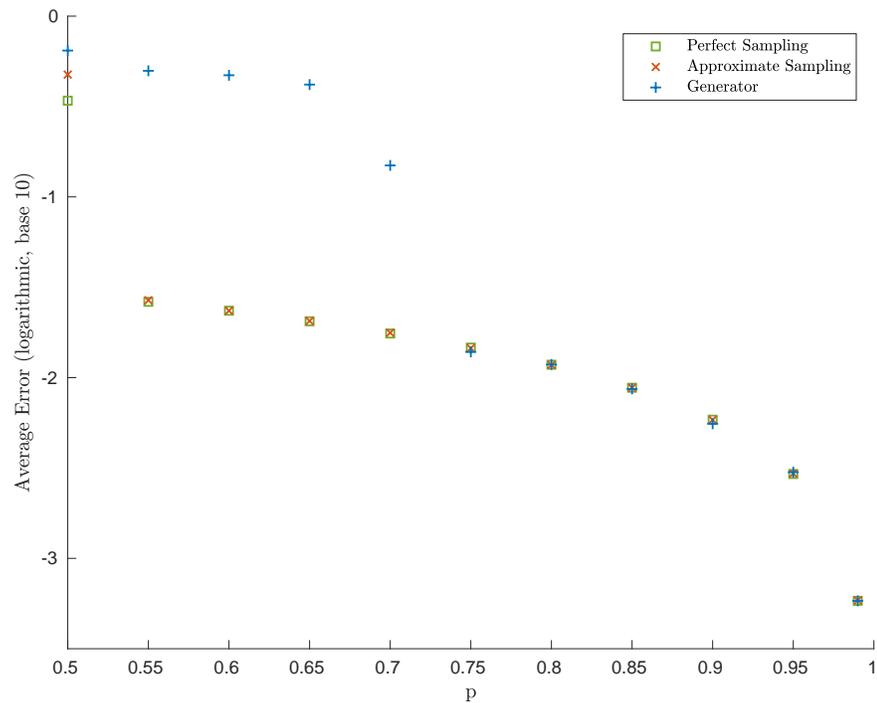}     
\end{center}
(b)
\begin{center}
\includegraphics[trim={1.5cm 7cm 0.5cm 7cm},clip,width=0.85\textwidth]{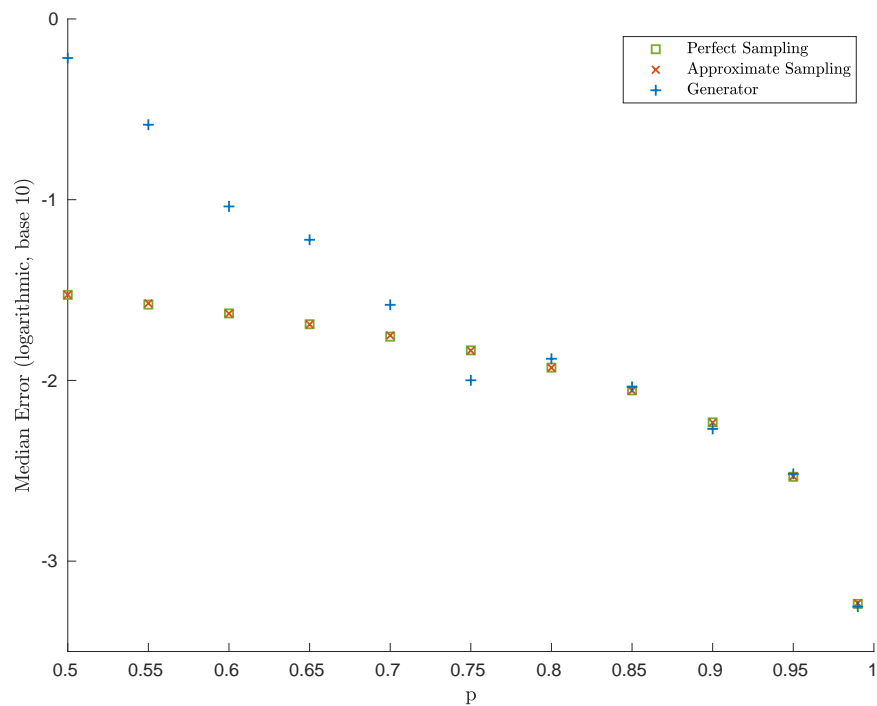}
\end{center}
\caption{\label{fig:comparison}Plot of the average error (a) and mean error (b) as a function of $p$ for different
versions of the RB protocol for the Clifford group and the random quantum channel noise model as defined in~\cref{equ:randomchannnelmodel}. For each value of $p$ we generated $20$ instances of the random channel with $M=100$ and $m=20$.
For the generator RB we chose
$b=5$ and to obtain the approximate samples we ran the chain for $20$ steps.}
\end{figure}

Finally, in \cref{fig:comparison} we compare the three different RB protocols discussed in this paper.
We compare the usual RB protocol, which we call the perfect sampling protocol, to the one with approximate samples
and the generator RB for the random quantum channel noise model. The curve makes clear that using approximate and exact samples leads to virtually indistinguishable estimates
and that all protocols have similar performance for $p$ close to $1$.
\section{Conclusion and Open Problems}\label{sec:Conclusion}
We have generalized the RB protocol to estimate the average gate fidelity
of unitary representations of arbitrary finite groups.
Our protocol is efficient when multiplying, inverting and sampling elements from the group can be done efficiently and we have shown
some potential applications that go beyond the usual Clifford one. Moreover, we showed that using approximate samples instead of perfect ones 
from the Haar measure on the group does not lead to great errors. This can be seen as a stability result for RB protocols 
w.r.t. sampling which was not available in the literature and is also relevant in the Clifford case.
We hope that this result can be useful in practice when one is not given a full description of the group but rather a set of generators.
Moreover, we have shown how to perform RB by just implementing a set of generators and one arbitrary gate under some
noise models. This protocol could potentially be more feasible for applications, as the set of gates we need to implement is on average
simpler.

However, some questions remain open and require further work. 
It is straightforward to generalize the technique of interleaved RB to this more general scenario and this would also be a
relevant development.
It would be important to derive confidence intervals for the estimates
as was done for the Clifford case in~\cite{Wallman_Flammia_2014,Helsen_2017}. Moreover, it would be relevant to estimate not only the mean fidelity but also the variance of this quantity. 
The assumption that the noisy channel is the same for all gates is not realistic in many scenarios and should be seen as a $0$-order approximation, as in~\cite{Magesan_2012}. 
It would be desirable to generalize our results to the case in which the channel depends weakly on the gate.

\ack
We would like to thank the anonymous for various valuable comments.
D.S.F. acknowledges support from the graduate program TopMath of the Elite Network of Bavaria, the 
TopMath Graduate Center of TUM Graduate School at Technische Universit\"{a}t M\"{u}nchen
and by the Technische Universit\"at M\"unchen – Institute for Advanced Study,
funded by the German Excellence Initiative and the European Union Seventh Framework
Programme under grant agreement no. 291763.

A.K.H.'s work is supported by the Elite Network of Bavaria through the PhD programme
of excellence \textit{Exploring Quantum Matter}.
\newpage
\appendix
\section*{Appendix}
\setcounter{section}{0}
\section{Numerical Considerations}\label{sec:numericalconsid}
Here we gather some comments on the numerical issues associated with the RB procedure
when estimating more than one parameter.

\subsection{Fitting the Data to Several Parameters}\label{sec:fitting}
In order to be able to estimate the average fidelity following the protocols discussed so far, it is
necessary to fit noisy data points $\{x_i\}_{i=1}^m\subset\bbR$ to a curve $f:\bbR\to\bbR$ of the form
\begin{equation}
f(x)=a_0+\sum\limits_{k=1}^na_k\rme^{-b_k x}, 
\end{equation}
with $a_0,a_1,\ldots,a_n,b_1,\ldots,b_n\in\bbC$.
Although this may look like an innocent problem at first sight, fitting noisy data to exponential curves
is a difficult problem from the numerical point of view for large $n$. It suffers from many stability 
issues, as thoroughly discussed in~\cite{hokanson2013numerically}. 
Here we are going to briefly comment on some of the issues and challenges faced when
trying to fit the data, although we admittedly only scratch the surface. For a more thorough analysis of some methods and issues,
we refer to~\cite{hokanson2013numerically,Holmstr_m_2002}.

We assume that we know the maximum number of different parameters, $2n+1$, which
we are fitting. This is given by the structure of the unitary representation at hand, as discussed in~\cref{lem:expodecay}.
Luckily, significant progress has been made in the recent years to develop algorithms to overcome the issues faced 
in this setting and it is now possible to fit curves to data with a moderate number of parameters.
It is also noteworthy that for $n=2$ there exist stable algorithms based on geometric sums~\cite{Holmstr_m_2002} which works
for equispaced data, as is our case.
For estimating more than two parameters one can use the algorithms proposed in~\cite{hokanson2013numerically},
available at~\cite{expofitcode}. It should be said that the reliability and convergence of most algorithms
found in the literature depends strongly on the choice of a good initial point. This tends not to be a problem, as
we might have some assumptions where our fidelity approximately lies and choose the initial $b_k$ accordingly.
What could be another source of numerical instabilities is the fact that we have to input the model with a number
of parameters, $n$.
In case the eigenvalues of $T$ are very close for different irreps, then this will lead to numerical instabilities.
This is the case if the noise is described by a depolarizing channel, for example. Furthermore, it might be the case that 
the initial state in our protocol does not intersect with all eigenspaces of the channel. This may lead to
some parameters $a_k$ being $0$ and we are not able to estimate some of the $b_k$ from them.

Moreover, it is in principle not possible to tell which parameter corresponds to which irrep given the decomposition 
in \cref{lem:expodecay}, which is again necessary to estimate the trace of the channel. So even in the case in which we have 
a small number of parameters, it is important that the different irreps associated to our parameters have a similar dimension
or to assume that the spectrum of the twirled channel contains eigenvalues that are very close to each other.
In this way, the most pessimistic estimate on the fidelity, as defined in~\cref{equ:pessfid}, is not very far from the most 
optimistic, defined in \cref{equ:optfid}.
This is one of the reasons we focus on examples that only have a small number of parameters, say $1$ or $2$,
and irreps of a similar dimension to avoid having numerical instabilities or estimates that range over an interval that is too large.

It is therefore important to develop better schemes to fit the data in the context of RB for more than one or
two parameters. This is important from a statistical point of view, as it would be desirable to obtain confidence intervals
for the parameters from the RB data. We will further develop this issue in \cref{sec:confidence}.
It would be worthwhile pursuing a Bayesian approach to this problem, 
as was done in~\cite{ferriebayesian} for the usual RB protocol.
\subsection{Isolating the parameters}\label{sec:isolatingparameters}
One way to possibly deal with this issue is to isolate each parameter, that is, by preparing states that only have support on one
of the irreps that are not the trivial one.
In the case of non-degenerate unitary representations, discussed in \cref{thm:structirredcov}, we have the following:
\begin{thm}[Isolating parameters]\label{thm:isolatingparameters}
Let $U:G\to\cM_d$ be a simply covariant irrep of a finite group $G$ and $T:\cM_d\to\cM_d$ a channel which is covariant
w.r.t. $U$.
Then, for all eigenvalues $\lambda_\alpha$ there is a quantum state $\rho_\alpha=\frac{\bbI}{d}+X$, where $X=X^\dagger$ and $\tr{X}=0$,
such that
\begin{align}\label{equ:almosteigenvaluechannel}
T^m\lb\rho_\alpha\rb=\frac{\bbI}{d}+\lambda_\alpha^m X. 
\end{align}
\end{thm}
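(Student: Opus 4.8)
The plan is to diagonalize $T$ explicitly and then obtain $\rho_\alpha$ as a small Hermitian perturbation of the maximally mixed state living in the desired eigenspace. Since $U$ is a simply covariant irrep, $T$ is irreducibly covariant, so \cref{thm:structirredcov} applies and gives $T = P^{\id} + \sum_{\alpha\in\hat G,\,\alpha\neq\id}\lambda_\alpha P^\alpha$, where the $P^\alpha$ of \cref{def:projec} are mutually orthogonal projections summing to $\id_d$ and the $\lambda_\alpha$ are exactly the eigenvalues of $T$. First I would identify $P^{\id}$: covariance forces $T(\bbI)$ to commute with the irreducible family $\{U_g\}$, so $T(\bbI)=\bbI$ by Schur's lemma and trace preservation, which means $P^{\id}(\cdot)=\tr{\cdot}\tfrac{\bbI}{d}$ is the projection onto the trivial component $\bbC\bbI$. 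Hence, for $\alpha\neq\id$, $\mathrm{range}(P^\alpha)$ is precisely the $\lambda_\alpha$-eigenspace of $T$, and since $P^{\id}P^\alpha=0$ every element of it has trace zero.

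Next I would produce a nonzero self-adjoint $X_0\in\mathrm{range}(P^\alpha)$. This is where a little care is needed: from the character formula for $P^\alpha$ one checks, using $\chi^\alpha(g^{-1})=\overline{\chi^\alpha(g)}$, that $Y\mapsto Y^\dagger$ maps $\mathrm{range}(P^\alpha)$ onto $\mathrm{range}(P^{\bar\alpha})$; when $\lambda_\alpha$ is real (equivalently, the corresponding constituent of the conjugation representation is self-conjugate) this range is $\dagger$-closed and, being a nonzero complex subspace, contains a nonzero Hermitian element $X_0=X_0^\dagger$ with $\tr{X_0}=0$. In the cases of interest this hypothesis is automatic: the nontrivial eigenspaces are spanned by Hermitian matrices, as is visible from the explicit projections in \cref{lem:structcov}.

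Then I would simply rescale. Because $\tfrac{\bbI}{d}$ is positive definite, there is $t>0$ with $\tfrac{\bbI}{d}+tX_0\geq 0$, and since $\tr{tX_0}=0$ the operator $\rho_\alpha:=\tfrac{\bbI}{d}+X$ with $X:=tX_0$ is a state of the stated form $\tfrac{\bbI}{d}+X$, $X=X^\dagger$, $\tr{X}=0$. Applying $T^m$ and using $T^m(\tfrac{\bbI}{d})=\tfrac{\bbI}{d}$ together with $T^m(X)=\lambda_\alpha^m X$, which holds since $X\in\mathrm{range}(P^\alpha)$, delivers \cref{equ:almosteigenvaluechannel}.

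The hard part will be the existence of a nonzero \emph{Hermitian}, traceless vector inside a single eigenspace; the identification of the spectrum via \cref{thm:structirredcov} and the final rescaling are routine. If one allows complex $\lambda_\alpha$, the natural fix is to work with $P^\alpha+P^{\bar\alpha}$ and note $\lambda_{\bar\alpha}=\overline{\lambda_\alpha}$, at the cost of $X$ mixing the two isotypic components; the displayed form of the theorem corresponds to the self-conjugate (in particular real-eigenvalue) case, which is the one relevant for the applications in \cref{sec:monomial}.
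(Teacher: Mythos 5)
Your proposal follows essentially the same route as the paper's own proof: identify the spectral projections $P^\alpha$ of \cref{thm:structirredcov}, note $T(\bbI)=\bbI$ (Schur's lemma plus trace preservation) and that the nontrivial ranges are traceless, produce a nonzero Hermitian element of a nontrivial eigenspace, rescale it into a state, and apply $T^m$. The step you flag as the hard part --- finding a Hermitian element inside a \emph{single} eigenspace --- is exactly where the paper's argument is weakest, and your treatment of it is the more careful one. The paper asserts that each $P^\alpha$ preserves Hermiticity because $\overline{\chi^\alpha(g^{-1})}=\chi^\alpha(g)$ and one sums over the whole group, but the correct computation is
\begin{align*}
P^\alpha(X)^\dagger=\frac{\chi^\alpha(e)}{|G|}\sum_{g\in G}\chi^\alpha(g)\,U_gX^\dagger U_g^\dagger=P^{\bar\alpha}(X^\dagger),
\end{align*}
so the range of $P^\alpha$ is $\dagger$-closed only when $\alpha\cong\bar\alpha$ as a constituent of the adjoint representation --- precisely the caveat you raise. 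Your caveat is not vacuous: for the Weyl--Heisenberg group the defining representation is irreducible and the adjoint representation is multiplicity-free, yet a generic covariant (Weyl) channel has non-real eigenvalues whose eigenspaces are spanned by single non-Hermitian Weyl operators, so both the statement for those $\lambda_\alpha$ and the paper's side remark that irreducibly covariant channels have real spectrum fail there. Your proof is therefore correct exactly where the theorem is correct, and your fix via $P^\alpha+P^{\bar\alpha}$ with $\lambda_{\bar\alpha}=\overline{\lambda_\alpha}$ is the right way to handle the remaining cases. One small imprecision: ``$\lambda_\alpha$ real'' and ``$\alpha$ self-conjugate'' are not equivalent (one can have $\alpha\not\cong\bar\alpha$ with $\lambda_\alpha=\lambda_{\bar\alpha}$ real), but in that situation the $\lambda_\alpha$-eigenspace contains both isotypic components and is still $\dagger$-invariant, so your conclusion goes through unchanged.
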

\begin{proof}
Consider the projections to the irreducible subspaces $P_\alpha$ defined in \cref{def:projec}.
For a self-adjoint operator $X\in\cM_d$ we have that
\begin{align*}
P_\alpha(X)^\dagger=\frac{\chi^\alpha(e)}{|G|} \sum_{g \in G} \chi^\alpha \left( g\right) U_g^\dagger X U_g=P_\alpha(X),
\end{align*}
as we are summing over the whole group and $\overline{\chi^\alpha \left( g^{-1}\right)}=\chi^\alpha \left( g\right)$.
Therefore, we have that the $P_\alpha$ are hermiticity preserving. As the image of $P_\alpha$ is the eigenspace corresponding
to the irreps, we thus only have to show that there exists a self-adjoint $X$ such that $P_\alpha(X)\not=0$.
But the existence of such an $X$ is clear, as we may choose a basis of $\cM_d$ that consists of self-adjoint operators.
Moreover, as for $\alpha$ not the trivial representation all eigenvectors are orthogonal to $\bbI$, it follows that
$\tr{X}=0$ and that for $\epsilon>0$ small enough $\frac{\bbI}{d}+\epsilon X$ is positive semidefinite.
To finish the proof, note that simply irreducible channels always satisfy
\begin{equation*}
T(\bbI)=\bbI. 
\qedhere
\end{equation*}
\end{proof}
Note that this also proves that the spectrum of irreducibly covariant channels is always real.
That is, if we can prepare a state such as in \cref{equ:almosteigenvaluechannel}, then we can perform the RB
with this as an initial state and estimate the eigenvalue corresponding to each irrep.
This would bypass the problems discussed in \cref{sec:fitting}.
The proof of \cref{thm:isolatingparameters} already hints a way of determining how to isolate the parameter: just apply 
the projector $P_\alpha$ to some states $\rho_i$. If the output is not $0$, then we can in principle write down a state
that ``isolates'' the parameter as in the proof of \cref{thm:isolatingparameters}.
This idea was explored in~\cite{2018arXiv180602048H} to obtain a way of isolating the parameters for irreducibly covariant channels
and general representation under additional assumptions.
However, in the case of the monomial unitary matrices discussed in \cref{sec:monomial}, we can examine the projections and see how to
isolate the parameters.
To isolate the parameter $\alpha$ in \cref{eq:symmchannel}, we can prepare initial states $\rho\in\cD_d$ that have $1/d$
as their diagonal elements and at least one nonzero off-diagonal element, as then the projector corresponding to $\beta$ 
vanishes on $\rho$ and does not vanish on the one corresponding to $\alpha$.
To isolate the parameter $\beta$, one can prepare states $\rho$ that are diagonal in the computational basis but are not the maximally
mixed state, as can be seen by direct inspection.
\section{Statistical Considerations}\label{sec:confidence}
One of the main open questions left in our work is how to derive good confidence intervals 
for the average fidelity. 
For the case of the Clifford group, discussed in~\cref{sec:cliffordgroup}, 
one can directly apply the results of~\cite{Wallman_Flammia_2014,Helsen_2017},
but it is not clear how one should pick $m$ and $M$ for arbitrary finite groups.
Especially in the case in which we are not working with Cliffords, it is not clear how
many sequences per point, $M$, we should gather and how big $m$ should be, as it depends on the choice 
of the algorithm picked for fitting the curve. As noted in \cref{sec:fitting}, this is not a trivial
problem from a numerical point of view.
However, it is possible to obtain estimates on how much the observed survival probability deviates from 
its expectation value by just using Hoeffding's inequality:
\begin{thm}
Let $\bar{F}(m,E,\rho)$ be the observed average fidelity with $M$ sequences and
${F}(m,E,\rho)$ the average fidelity for any of the protocols discussed before and $\epsilon>0$.
Then:
\begin{align}
\bbP(|F(m,E,\rho)-\bar{F}(m,E,\rho)|\geq\epsilon)\leq \rme^{-2M\epsilon^{2}}.
\end{align}
\end{thm}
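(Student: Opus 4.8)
The plan is to recognize $\bar F(m,E,\rho)$ as the empirical mean of $M$ independent and identically distributed random variables taking values in $[0,1]$, and then to invoke Hoeffding's inequality. First I would fix notation: in each of the protocols the $M$ realizations are produced by independently repeating Steps~2--3, so writing $X_i$ for the sequence fidelity of the $i$-th realization --- that is, $X_i=\tr{\cS_m^{(i)}(\rho)E}$ for the protocol of \cref{sec:RBprotocol}, and $X_i=\tr{\cS_{s_i,b+1,b+m+1}(\cS_{s_i,1,b}(\rho))E}$ for the generator protocol of \cref{sec:Generators} --- the $X_i$ are i.i.d.\ with $\bar F(m,E,\rho)=\frac1M\sum_{i=1}^M X_i$ and $F(m,E,\rho)=\mathbb{E}(X_1)$. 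I would remark that within a single sequence the gates (equivalently the $\cD_j$) need not be mutually independent, as happens in the generator protocol, but this plays no role, since only the independence \emph{across} the $M$ realizations is used.

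The key step is to check that $0\le X_i\le 1$ almost surely. Each $\cS_m^{(i)}$ is a composition of the unitary channels $\cU_{g_j}$ with copies of the channel $T$, hence completely positive and trace preserving, so $\cS_m^{(i)}(\rho)\in\cD_d$ is again a quantum state; since $E$ is a POVM effect we have $0\le E\le\bbI$, and therefore $0\le\tr{\cS_m^{(i)}(\rho)E}\le\tr{\cS_m^{(i)}(\rho)}=1$. The identical argument applies to the generator version. With this boundedness in hand, Hoeffding's inequality for i.i.d.\ variables confined to an interval of length one yields $\bbP(|F(m,E,\rho)-\bar F(m,E,\rho)|\ge\epsilon)\le\rme^{-2M\epsilon^{2}}$, which is the claim.

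I do not expect a real obstacle here: the only nontrivial ingredient is the boundedness observation, which is immediate from the channel structure of $\cS_m^{(i)}$ and the normalization of $E$, and everything else is the textbook concentration bound. The more substantive point --- worth flagging in the surrounding discussion but not part of this statement --- is that this inequality controls only the deviation of the \emph{observed} survival probability from its expectation at fixed $m$; it does not by itself produce a confidence interval for the average gate fidelity, which is extracted from the curve $F(m,E,\rho)$ through the numerically delicate fit discussed in \cref{sec:fitting}.
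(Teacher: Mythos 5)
Your proposal is correct and coincides with the paper's own argument, which likewise observes that $\bar F(m,E,\rho)$ is the empirical mean of $M$ i.i.d.\ random variables valued in $[0,1]$ with expectation $F(m,E,\rho)$ and applies Hoeffding's inequality; your only additions are the explicit verification of boundedness and the remark about independence across realizations, both of which the paper leaves implicit. (The one quibble, inherited from the paper's statement rather than introduced by you, is that the two-sided form of Hoeffding's inequality gives $2\rme^{-2M\epsilon^2}$ rather than $\rme^{-2M\epsilon^2}$.)
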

\begin{proof}
This is just a straightforward application of Hoeffding's inequality~\cite{hoeffding}, 
as $\bar{F}(m,E,\rho)$ is just the empirical average of a random variable whose value is contained in $[0,1]$
and whose expectation value is ${F}(m,E,\rho)$.
\end{proof}
This bound is extremely general, as we did not even have to use any property of the random variables or of the group
at hand. One should not expect it to perform well for specific cases and the scaling it gives is still undesirable for applications.
Indeed, to assure we are $10^{-4}$ close to the expectation value with probability of $0.95$, we need around $6\times 10^8$ sequences, which
is not feasible. Thus, it is necessary to derive more refined bounds for specific groups.

\section{Proof of Theorem \ref{thm:djsameasproduct}}\label{sec:proofapproxsamples}

\djsameasproduct*

\begin{proof}
From \cref{lem:approxtwirl} it suffices to show 
\begin{align*}
\|\tilde{\nu}-\mu^{\otimes m}\|_1\leq2\sqrt{\frac{\log(|G|)}{1-|G|^{-1}}\sum_{k=1}^m\epsilon_k},  
\end{align*}
as the $1\to1$ norm contracts under quantum channels~\cite{PerezGarcia_2006}.

We will first show that 
\begin{align*}
\|\tilde{\nu}-\mu^{\otimes m}\|_1=\|\otimes_{k=1}^m\nu_k-\mu^{\otimes m}\|_1. 
\end{align*}
We may rewrite the distribution $\tilde{\nu}$ in terms of the $\nu_k$ as follows:
\begin{align*}
\bbP(\cD_1=g_1,\cD_2=g_2,\ldots,\cD_m=g_m)
=&\bbP(U_1=g_1,U_2=g_2g_1^{-1},\ldots,U_m=g_mg_{m-1}^{-1})\\ 
=&\nu_1(g_1)\nu_2(g_2g_1^{-1})\ldots\nu_m(g_mg_{m-1}^{-1}), 
\end{align*}
as the $U_{g_i}$ are independent. 

Note that the map $\sigma:G^m\to G^m$,
$\lb g_1,\ldots,g_m\rb\mapsto\lb g_1,g_2g_1^{-1},\ldots,g_mg_{m-1}^{-1}\rb$ is bijective.
Moreover, we have
$\tilde{\nu}=\otimes_{k=1}^m\nu_k\circ \sigma$. As the total variation norm is invariant under compositions
with bijections on the state space, we have
\begin{equation*}
\|\tilde{\nu}-\mu^{\otimes m}\|_1=\|\otimes_{k=1}^m\nu_k\circ \sigma-\mu^{\otimes m}\|_1
=\|\otimes_{k=1}^m\nu_k-\mu^{\otimes m}\circ \sigma^{-1}\|_1=\|\otimes_{k=1}^m\nu_k-\mu^{\otimes m}\|_1,
\end{equation*}
where the last equality follows from the fact that the Haar measure is invariant under bijections.
We will now bound $\|\otimes_{k=1}^m\nu_k-\mu^{\otimes m}\|_1$.
By Pinsker's inequality~\cite{Sason_2015}, we have 
\begin{align}\label{equ:channelentropy}
\|\otimes_{k=1}^m\nu_k-\mu^{\otimes m}\|_1^2\leq 4\Dkl{\otimes_{k=1}^m\nu_k}{\mu^{\otimes m}}=4\sum\limits_{k=1}^m\Dkl{\nu_k}{\mu}.
\end{align}
Here $D$ is the relative entropy.
In~\cite[Theorem 1]{Sason_2015} they show that 
\begin{align*}
\Dkl{\nu_k}{\mu}\leq \frac{\log(|G|)}{1-|G|^{-1}}\|\mu-\nu_k\|_1 
\end{align*}
and from \ref{equ:measurehaarclose1} it follows that 
\begin{align}\label{equ:finalboundrelent}
\Dkl{\nu_k}{\mu}\leq \frac{\log(|G|)}{1-|G|^{-1}}\epsilon_k. 
\end{align}
Combining \cref{equ:finalboundrelent} with \cref{equ:channelentropy} and taking the square root yields the claim. 
\end{proof}
\section{Proof of Theorem \ref{thm:1to1gen}}\label{sec:proofapproxgen}

\onetoonegen*

\begin{proof}
Let $T_c$ and $T_n$ be as in \cref{def:deltacov}. Then we have
\begin{align*}
\cT(T)=(1-\delta)T_c+\delta \cT(T_n),
\end{align*}
as $T_c$ is already covariant, and 
\begin{align}\label{equ:realuptosecond}
\nonumber
\cT(T)^m=&(1-\delta)^mT_c^m+\delta(1-\delta)^{m-1}\sum_{j=0}^{m-1}T_c^{j}\cT(T_n)T_c^{m-j-1}\\
&  +\delta^2(1-\delta)^{m-2}\sum\limits_{j_1+j_2+j_3=m-2}T_c^{j_1}\cT(T_n)T_c^{j_2}\cT(T_n)T_c^{j_3}+\Or(\delta^3).
\end{align}
Moreover, as $T_c$ is covariant w.r.t. this unitary representation, we have
\begin{multline}
\mathbb{E}(\cS_{b,m+b+1})=(1-\delta^m)T_c+\delta(1-\delta)^{m-1}\sum_{j=0}^{m-1}\mathbb{E}\lb T_c^{j}\cD_{m-j}T_n\cD_{m-j}^*T_c^{m-j-1}\rb\\
 +\delta^2(1-\delta)^{m-2}\sum\limits_{j_1+j_2+j_3=m-2}\mathbb{E}\lb T_c^{j_1}\cD_{j_2+1}T_n\cD_{j_2+1}^*T_c^{j_2}\cD_{j_3+1}T_n\cD_{j_3+1}^*T_c^{j_3}\rb
+\Or(\delta^3)
\label{equ:approxuptosecond}
\end{multline}
It is clear that the terms of $0-$order in $\delta$ in \cref{equ:realuptosecond} and \cref{equ:approxuptosecond}
coincide.
Comparing each of the summands of first order we obtain:
\begin{align*}
&\mathbb{E}\lb T_c^{j}\cD_{m-j}T_n\cD_{m-j}^*T_c^{m-j-1}\rb-T_c^{j}\cT(T_n)T_c^{m-j-1}\\
=&\sum\limits_{g\in G}\lb\nu_{m-j}(g)-\frac{1}{|G|}\rb T_c^{j}\cU_{g}T_n\cU_{g}^*T_c^{m-j-1},
\end{align*}
where $\nu_{m-j}$ is the distribution of $\cD_{m-j}$.
Comparing the terms of second order we obtain:
\begin{align*}
&\mathbb{E}\lb T_c^{j_1}\cD_{j_2+1}T_n\cD_{j_2+1}^*T_c^{j_2}\cD_{j_3+1}T_n\cD_{j_3+1}^*T_c^{j_3}\rb-T_c^{j_1}\cT(T_n)T_c^{j_2}\cT(T_n)T_c^{j_3}\\
=&\sum\limits_{g_1,g_2\in G}\lb\tau_{j_3+1,j_2+1}(g_1,g_2)-\frac{1}{|G|^2}\rb T_c^{j_1}\cU_{g_1}T_n\cU_{g_1}^*T_c^{j_2}\cU_{g_2}T_n\cU_{g_2}^*T_c^{j_3}.
\end{align*}
Here $\tau_{j_3+1,j_2+1}$ is the joint distribution of $\cD_{j_3+1}$ and $\cD_{j_2+1}$.
Then, using arguments similar to those of \cref{thm:djsameasproduct}, we have that
\begin{multline*}
\nonumber
\|\cT(T)^m-\mathbb{E}(S_{b,m+1})\|_{1\to1} \\ 
\leq\delta(1-\delta)^{m-1}\sum\limits_{j=1}^m\|\nu_{j}-\mu\|+
\delta^2(1-\delta)^{m-2}\sum\limits_{j_1=1}^{m-1}\sum\limits_{j_2=j_1+1}^{m}\|\tau_{j_1,j_2}-\mu^{\otimes 2}\|_1+\Or(\delta^3).
\end{multline*}
Now, from our choice of $b$, we have $\|\nu_{j}-\mu\|_1\leq\frac{\epsilon}{m}$.
Furthermore, we have that 
\begin{align*}
\tau_{j_1,j_2}(g_1,g_2)=\bbP(\cD_{j_1}=\cU_{g_1},\cD_{j_2}=\cU_{g_2})=\bbP(\cD_{j_2}=\cU_{g_2}|\cD_{j_1}=\cU_{g_1})\bbP(\cD_{j_1}=\cU_{g_1}). 
\end{align*}
By the construction of the $\cD_j$, it holds that 
\begin{align*}
\bbP(\cD_{j_2}=\cU_{g_2}|\cD_{j_1}=\cU_{g_1})=\pi^{j_2-j_1}(g_1,g_2), 
\end{align*}
where $\pi$ is the stochastic matrix of the chain generated by $A$.
From this we obtain
\begin{align}
\nonumber
&\sum\limits_{g_1,g_2\in G}\left|\tau_{j_1,j_2}(g_1,g_2)-\frac{1}{|G|^2}\right|=\sum\limits_{g_1,g_2\in G}\left|\nu_{j_1}(g_1)\pi^{j_2-j_1}(g_1,g_2)-\frac{1}{|G|^2}\right|\\
\leq&\sum\limits_{g_1,g_2\in G}\left|\nu_{j_1}(g_1)-\frac{1}{|G|}\right|\pi^{j_2-j_1}(g_1,g_2)+\left|\frac{1}{|G|}\pi^{j_2-j_1}(g_1,g_2)-\frac{1}{|G|^2}\right|.
\label{equ:secondorderstuff}
\end{align}
As the matrix $\pi$ is doubly stochastic, summing over $g_2$ first
\begin{align*}
\sum\limits_{g_1,g_2\in G}\left|\nu_{j_1}(g_1)-\frac{1}{|G|}\right|\pi^{j_2-j_1}(g_1,g_2)=\sum\limits_{g_1\in G}\left|\nu_{j_1}(g_1)-\frac{1}{|G|}\right|\leq\epsilon m^{-1},
\end{align*}
which again follows from our choice of $b$.
We now estimate the other term in \cref{equ:secondorderstuff},
\begin{equation}\label{equ:sumaverage}
\sum\limits_{j_1=1}^{m-1}\sum\limits_{j_2=j_1+1}^m\sum\limits_{g_1,g_2\in G}\frac{1}{|G|}\left|\pi^{j_2-j_1}(g_1,g_2)-\frac{1}{|G|}\right|
\end{equation}
Note that for a fixed $g_1$, 
\begin{align*}
\sum\limits_{g_2\in G}\left|\pi^{j_2-j_1}(g_1,g_2)-\frac{1}{|G|}\right|
\end{align*}
is just the total variation distance between the Markov chain starting at $g_1$ and the Haar measure after $j_2-j_1$ steps.
Thus, in case $j_2-j_1\geq t_1\left(\epsilon m^{-1}\right)$,
\begin{equation}\label{equ:sumaverage2}
\sum\limits_{g_1,g_2\in G}\frac{1}{|G|}\left|\pi^{j_2-j_1}(g_1,g_2)-\frac{1}{|G|}\right|\leq\frac{\epsilon}{m}.
\end{equation}
and in case $j_2-j_1\leq t_1\left(\epsilon m^{-1}\right)$ we have the trivial estimate
\begin{equation}\label{equ:sumaverage3}
\sum\limits_{g_1,g_2\in G}\frac{1}{|G|}\left|\pi^{j_2-j_1}(g_1,g_2)-\frac{1}{|G|}\right|\leq2. 
\end{equation}
Combining inequalities~\cref{equ:sumaverage2} and~\cref{equ:sumaverage3}, we obtain
\begin{align*}
\sum\limits_{j_1=1}^{m-1}\sum\limits_{j_2=j_1+1}^m\sum\limits_{g_1,g_2\in G}\frac{1}{|G|}\left|\pi^{j_2-j_1}(g_1,g_2)-\frac{1}{|G|}\right|=\mathcal{O}(mt_1(\epsilon m^{-1})).
\end{align*}

Putting all inequalities together, we obtain the claim.
\end{proof}

\section{Proof of Lemma \ref{lem:structcov}}\label{sec:proofstruct}

\lemmastructcov*

\begin{proof} (2) $\Rightarrow$ (1) can be seen by direct inspection. 
In order to prove the converse, we consider the Choi-Jamiolkowski state
$\tau_T:=\frac1d \sum_{i,j=1}^d T\big(\ketbra{i}{j}\big)\otimes \ketbra{i}{j}$. 
Then (1) is equivalent to the statement that $\tau_T$ commutes with all unitaries 
of the form $U\otimes\bar{U}$, $U\in MU(d,n)$. That is, we have
\begin{align*}
\sum_{i,j=1}^dU\otimes\bar{U}\lb T\big(\ketbra{i}{j}\big)\otimes \ketbra{i}{j}\rb(U\otimes\bar{U})^\dagger=\sum_{i,j=1}^dT\big(\ketbra{i}{j}\big)\otimes \ketbra{i}{j}.
\end{align*}
Restricting to the subgroup of diagonal unitaries in $MU(d,n)$, for which $U^\dagger=\bar{U}$, we have
\begin{align*}
\sum_{i,j=1}^d \rme^{\rmi(\phi_j-\phi_i)}UT\big(\ketbra{i}{j}\big)\bar{U}\otimes \ketbra{i}{j}=\sum_{i,j=1}^dT\big(\ketbra{i}{j}\big)\otimes \ketbra{i}{j}, 
\end{align*}
where $\rme^{\rmi\phi_i}$ is the $i$-th diagonal entry of $U$. Comparing the tensor factors
it follows that
\begin{align}\label{equ:almosteigenvalue}
\rme^{\rmi(\phi_j-\phi_i)}UT\big(\ketbra{i}{j}\big)\bar{U}=T\big(\ketbra{i}{j}\big). 
\end{align}
We will now show that we have
\begin{align}\label{equ:formbeforepermu}
 \tau_T=\sum_{i,j=1}^d A_{ij}\ketbra{i}{i}\otimes \ketbra{j}{j} + B_{ij}\ketbra{i}{j} \otimes \ketbra{i}{j}.
\end{align}
We have
\begin{align*}
T(\ketbra{i}{j})=\sum\limits_{k,l=1}^da_{k,l}\ketbra{k}{l} 
\end{align*}
for some $a_{k,l}\in\bbC$. From \cref{equ:almosteigenvalue} it follows that
\begin{align}\label{equ:writingoutbasis}
\sum\limits_{k,l=1}^d\rme^{\rmi(\phi_k-\phi_l)}a_{k,l}\ketbra{k}{l}=\rme^{\rmi(\phi_i-\phi_j)}\sum\limits_{k,l=1}^da_{k,l}\ketbra{k}{l} 
\end{align}
for all diagonal unitaries.
Again comparing both sides of \cref{equ:writingoutbasis} we have $a_{k,l}\rme^{\rmi(\phi_i-\phi_j)}=\rme^{\rmi(\phi_k-\phi_l)}a_{k,l}$.
Suppose now $i\not=j$. For $a_{k,l}\not=0$ we have
\begin{align}\label{equ:relationphases}
\rme^{\rmi(\phi_i-\phi_j)}=\rme^{\rmi(\phi_k-\phi_l)} 
\end{align}
for all diagonal entries of diagonal unitaries. If $k,l,i$ and $j$ are all pairwise distinct, we have
$i=k$ and $j\not=l$ or $i\not=k$ and $j=l$, then it is clear that we may always
find a combination of $\phi_k,\phi_l,\phi_i$ and $\phi_j$ such that \cref{equ:relationphases} is not satisfied, a contradiction.
For $i=l$ and $k=j$, it is only possible to find such a combination for $n>2$, as otherwise $\phi_i-\phi_j=-(\phi_i-\phi_j)$
always holds.
This proves that we have 
\begin{align}\label{equ:ijeigenvec}
T\lb\ketbra{i}{j}\rb=B_{ij}\ketbra{i}{j}
\end{align}
for $i\not=j$.
For $i=j$ we have analogously that
\begin{align*}
UT(\ketbra{i}{i})\bar{U}=\sum\limits_{k,l=1}^d\rme^{\rmi(\phi_k-\phi_l)}a_{k,l}\ketbra{k}{l}=\sum\limits_{k,l=1}^da_{k,l}\ketbra{k}{l}.  
\end{align*}
In this case, we have $a_{k,l}=\rme^{\rmi(\phi_k-\phi_l)}a_{k,l}$ for all possible phases of the form $e^{\rmi(\phi_k-\phi_l)}$.
It is then clear that $a_{k,l}=0$ unless $k=l$ by a similar argument as before.
This gives 
\begin{align}\label{equ:diagtodiag}
T\lb\ketbra{i}{i}\rb=\sum_{j=1}^d A_{ij}\ketbra{j}{j}. 
\end{align}
Putting together \cref{equ:diagtodiag} and \cref{equ:ijeigenvec} implies \cref{equ:formbeforepermu}.
Next, we will exploit that $\tau_T$  commutes in addition with permutations of the form $U_\pi\otimes U_\pi$ for all $\pi\in S_d$. 
For $i\neq j$ this implies that $A_{i,j}=A_{\pi(i),\pi(j)}$ and $B_{i,j}=B_{\pi(i),\pi(j)}$ so that there is 
only one independent off-diagonal element for each $A$ and $B$. 
The case $i=j$ leads to a third parameter that is a coefficient in front of $\sum_\alpha|ii\rangle\langle ii|$. 
Translating this back to the level of projections then yields \cref{eq:symmchannel}. 
The fact that the terms of \cref{eq:symmchannel} are projections can be seen by direct inspection.
Note that the term corresponding to $\alpha$ is the difference of two projections, the identity and projection onto
diagonal matrices. As the rank of the identity is $d^2$ and the space of diagonal matrices has dimension $d$, we obtain the claim.
The same reasoning applies to the term corresponding to $\beta$, as it is the difference of the projection onto diagonal matrices
and the projection onto the maximally mixed state. The latter is a projection of rank $1$, which yields a rank of $d-1$ for their difference.
\end{proof}

\section*{References}
\bibliographystyle{iopart-num}
\bibliography{generalbenchmarkingliterature}

\end{document}